  \newcommand{\eqnum}{\leavevmode\hfill\refstepcounter{equation}\textup{\tagform@{\theequation}}} 
\theoremstyle{plain}
\newtheorem{theorem}{Theorem}[section]
\newtheorem{lemma}[theorem]{Lemma}
\newtheorem{prop}[theorem]{Proposition}
\theoremstyle{definition}
\theoremstyle{remark}
\numberwithin{equation}{section}
\newtheorem{remark}[theorem]{Remark}
\definecolor{aliceblue}{rgb}{0.94, 0.97, 1.0} 
\definecolor{azuremist}{rgb}{0.94, 1.0, 1.0} 
\definecolor{antiquewhite}{rgb}{0.98, 0.92, 0.84} 
\definecolor{ivory}{rgb}{1.0, 1.0, 0.94}
\newtheorem{mdexample}[theorem]{Example.}
\newenvironment{example}%
  {\begin{mdframed}[backgroundcolor=ivory]\begin{mdexample}}%
  {\end{mdexample}\end{mdframed}}
\newcommand{\bw}{\mathrm{bw}}
\newcommand{\fw}{\mathrm{fw}}
\newcommand{\grad}{\nabla}
\newcommand{\iso}{\mathrm{iso}}
\newcommand{\sym}{\mathrm{sym}}
\newcommand{\asym}{\mathrm{asym}}
\DeclareMathOperator{\Prob}{\mathrm{Prob}}
\newcommand{\super}[1]{^{\scriptscriptstyle{(#1)}}}
\newcommand{\supernul}[1]{^{{\scriptscriptstyle(#1)}0}}
\def\tp{^\mathsf{T}}
\def\se{\mathsf{e}}
\def\sa{\mathsf{a}}
\def\J{\mathcal{J}}
\def\I{\mathcal{I}}
\def\H{\mathcal{H}}
\def\L{\mathcal{L}}
\def\NN{\mathbb{N}}
\def\bigoh{\mathcal{O}}
\def\R{\mathcal{R}}
\def\S{\mathcal{S}}
\def\fS{\mathfrak{S}}
\def\V{\mathcal{V}}
\def\X{\mathcal{X}}
\def\Q{\mathcal{Q}}
\def\RR{\mathbb{R}}
\def\NN{\mathbb{N}}
\date \today
\title{Anisothermal Chemical Reactions: Onsager-Machlup and Macroscopic Fluctuation Theory}
\author{D.R.\ Michiel Renger\thanks{WIAS, Mohrenstrasse 39, 10117 Berlin, Germany. Email: \href{mailto:renger@wias-berlin.de}{renger@wias-berlin.de}}}
\begin{document}
\maketitle

\begin{abstract}
We study a micro and macroscopic model for chemical reactions with feedback between reactions and temperature of the solute. The first result concerns the quasipotential as the large-deviation rate of the microscopic invariant measure. The second result is an application of modern Onsager-Machlup theory to the pathwise large deviations, in case the system is in detailed balance. The third result is an application of macroscopic fluctuation theory to the reaction flux large deviations, in case the system is in complex balance.
\end{abstract}

\section{Introduction}
\label{sec:intro}

\subsection{Historic overview}
\label{subsec:history}
Recent decades have seen a great progress in non-equilibrium thermodynamics. In a sense, the discovery of Wasserstein gradient flows~\cite{JKO1998} for Fokker-Planck equations meant a strengthening of the second law: not only does the free energy decrease along the dynamics, but the decrease of free energy fully determines the dynamics. Many other evolution equation have since been shown to be gradient flows driven by free energy functionals, see for example~\cite{AGS2008}. We mention in particular the works~\cite{Maas2011,Mielke2012a,Chow2012} for gradient flows on discrete spaces and gradient flows for the chemical reaction-rate equation~\cite{LieroMielke2013,MaasMielke2020}. 

As often in thermodynamics, many of these structures are actually related to the statistics of microscopic particle systems, which adds to their physical validity. In this respect, the Wasserstein metric tensor and its corresponding optimal transport formulation were already derived from large deviations of Brownian particles in~\cite{Dawson1987} and \cite{Leonard2007} respectively.
More recent works show that gradient flow structures can be uniquely derived from pathwise large deviations via what may be called a modern version of Onsager-Machlup theory. Whereas Onsager's original paper~\cite{Onsager1931I} already deals with gradient flow structures for the chemical reaction-rate equation, his later work with Machlup~\cite{Onsager1953I} makes the connection to fluctuations of a microscopic particle system via the pathwise large deviations. More modern and precise relations of this type were made for exclusion processes and Brownian particles in~\cite{BertiniEtAl2004,ADPZ2011,ADPZ2012,Duong2013a,EMR2015}. Pursuing a similar program for particles on discrete spaces required generalised gradient flows, allowing for nonlinear response theory. Although such structures were already explored in the classic~\cite{Marcelin1915}, its relation to large deviations was discovered in~\cite{MielkePeletierRenger2014}, and worked out for chemical reactions in~\cite{MielkePattersonPeletierRenger2016}. Notably, it turns out that the above-mentioned (linear) gradient structures for the chemical reaction-rate equation correspond to white noise fluctuations, whereas the nonlinear structure from \cite{MielkePattersonPeletierRenger2016} corresponds to more physical fluctuations due to reactions on the molecular scale.

As prescribed by Onsager's reciprocity relations, this connection between gradient flows and large deviations can only be carried out for microscopic systems in detailed balance, which loosely corresponds to thermodynamically closed systems on the macro scale. Hence, for systems that are driven out of equilibrium by an external force, one looks for thermodynamically consistent couplings of gradient flows with Hamiltonian systems or other `rotational' motions. In connection with large deviation there are two main directions in the literature. First, the work~\cite{KLMP2018math} studies how large deviations are related to GENERIC~\cite{Grmela1997}. This allows for microscopic particle systems that are in detailed balance with an additional drift that is approximately deterministic. In the context of chemical reactions, such drifts can be obtained by chemical reactions of lower-order concentrations on a faster time scale~\cite{Renger2018}. The second approach is Macroscopic Fluctuation Theory (MFT), based on the orthogonal decomposition of thermodynamics forces~\cite{Bertini2015MFT}. As for the setting of gradient flows described above, the connection with large deviations required a generalisation to nonlinear response relations, which was carried out in~\cite{PattersonRengerSharma2021TR}. The same paper shows how this program can be applied to reacting particle systems in case of complex balance (which is more general than detailed balance).

So far, most literature on this topic is restricted to constant temperature. In this paper we describe and explain how the above program can be applied to temperature-dependent chemical reactions. We consider a finite isolated box with a well-mixed solute, so that we may ignore spatiality. The solute contains reactants undergoing a set of reactions, and as temperature-dependent reactions take place, heat is being exchanged with the solute. Hence there is a coupling between the evolution of concentrations and temperature. This phenomenon can be modelled on two different levels.

\subsection{Macroscopic model}
\label{sec:macro}

The solute contains reactants of species $x\in\X$ undergoing reactions $r\in\R$. A reaction $r=(\alpha\to\alpha')$ takes $\alpha\in\NN_0^\X$ molecules and produces $\alpha'\in\NN_0^\X$ molecules. We shall assume that the reaction network is `reversible', meaning that $(\alpha\to\alpha')\in\R\implies (\alpha'\to\alpha)\in\R$, see for example~\cite[Def.~2.2]{AndersonCraciunKurtz2010}. This allows to decompose the set of reactions into forward and backward reactions $\R=\R_\fw\cup\R_\bw$. For a reaction $r=(\alpha\to\alpha')$ we define its backward reaction $\bw(r)\coloneqq (\alpha'\to\alpha)$, and we write $\alpha\super{r}\coloneqq \alpha$ and $\alpha^{\bw(r)}=\alpha'$.

Further, we assume that the reaction rates $k_r$ for reaction $r$, depending on concentrations $\rho\in\lbrack0,\infty)^\X$ of species and temperature $\theta\in\lbrack0,\infty)$ of the solute, obey the law of mass action:
\begin{align}
  k_r(\rho,\theta)\coloneqq \kappa_r A_r(\theta) B_r(\rho),
  &&\text{with}&&
  B_r(\rho)\coloneqq \rho^{\alpha\super{r}}\coloneqq \prod_{x\in\X}\rho_x^{\alpha\super{r}_x},
\label{eq:reaction rates}
\end{align}
and $\kappa_r$ are given constants.

The temperature-dependent factors $A_r(\theta)$ obey the (modified) Arrhenius or Eyring law as follows. Each species $x\in\X$ corresponds to an energy level $\se_x\geq0$, so that a complex $\alpha$ stores a total amount $\se\cdot\alpha=\sum_{x\in\X}\se_x\alpha_x$ of chemical energy. A reaction $r$ needs to go through a transitional state with chemical energy $\sa_r$, and so the activation energy or energy barrier of the reaction is $\sa_r - \se\cdot\alpha\super{r}$, see Figure~\ref{fig:energy levels} for a schematic representation with a one-dimensional reaction coordinate. From this perspective it is logical to assume that $\sa_{\bw(r)}=\sa_r$. According to the Arrhenius or Eyring law,
\begin{equation}
  A_r(\theta)\coloneqq  \theta^q\exp\big( - \frac{\sa_r - \se\cdot\alpha\super{r}}{ k_B \theta}\big),
\label{eq:Arrhenius}
\end{equation}
for some $q\in(-1,1\rbrack  $, where $k_B$ is the Boltzmann constant.

\begin{figure}[h!]
\centering
\begin{tikzpicture}
\tikzstyle{every node}=[font=\scriptsize];
  \draw[<->](0,4) node[anchor=east]{energy level}--(0,0)--(4,0) node[anchor=west]{reaction coordinate};
  \draw[dotted](0,2) node[anchor=east]{$\se\cdot\alpha\super{r}$}--(4,2);
  \draw[dotted](1,2)--(1,0) node[anchor=north]{$\alpha\super{r}$};
  \draw[dotted](0,1) node[anchor=east]{$\se\cdot\alpha^{\bw(r)}$}--(4,1);
  \draw[dotted](3,1) -- (3,0) node[anchor=north]{$\alpha^{\bw(r)}$};
  \draw[dotted](0,3) node[anchor=east]{$\sa_r$}--(4,3);
  \draw(0.5,2.5).. controls (0.7,2) and (0.85,2) .. (1,2) .. controls(1.5,2) and (1.6,3) .. (2,3) .. controls (2.4,3) and (2.5,1) .. (3,1) .. controls (3.15,1) and (3.3,1) .. (3.5,1.5);
  \draw[->](1,2)-- node[midway,anchor=west]{activation energy} (1,3);
  \draw[->](3,2)-- node[midway,anchor=west]{energy release} (3,1);
\end{tikzpicture}
\caption{Energy levels corresponding to a reaction $r$.}
\label{fig:energy levels}
\end{figure}
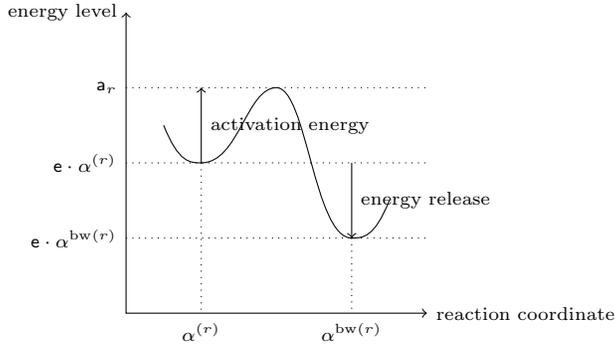

The effect of a reaction $r$ occurring with some rate $j_r\geq0$ is twofold. First, reactants $\alpha\super{r}$ are annihilated and the products $\alpha^{\bw(r)}$ are created, causing a change in concentrations with rate $(\alpha^{\bw(r)}-\alpha\super{r})j_r$. To simplify notation one encodes this principle in the stoichiometric matrix:
\begin{align}
  \Gamma\coloneqq \lbrack \gamma\super{r} \rbrack_{r\in\R_\fw} \in \RR^{\X\times\R_\fw}, 
  &&
  \gamma\super{r}\coloneqq \alpha^{\bw(r)} - \alpha \super{r}
\label{eq:Gamma}
\end{align}
The second effect of a reaction $r$ is a change in energy levels such that heat $-\se\cdot\gamma\super{r}$ is transferred to (or extracted from) the solute. Assuming a homogeneous and well-mixed solute with specific heat $c_H$, this causes the solute temperature to change with rate $-c_H^{-1}\se\cdot\gamma\super{r}j_r$.

Concluding, the evolution of the concentrations and temperature can now be written as:
\begin{align}
  \dot\rho(t) &=  \Gamma J^0 \big(\rho(t),\theta(t)\big), \label{eq:ODE}\\
  \dot\theta(t) &= -\frac1{c_H} \se\cdot\Gamma J^0 \big(\rho(t),\theta(t)\big) \notag
\end{align}
with given initial conditions $\rho(0)=\rho^0,\theta(0)=\theta^0$. Here we used the abbreviation
\begin{equation*}
  J^0_r(\rho,\theta)\coloneqq k_r(\rho,\theta)-k_{\bw(r)}(\rho,\theta)=\kappa_r A_r(\theta) B_r(\rho) - \kappa_{\bw(r)} A_{\bw(r)}(\theta) B_{\bw(r)}(\rho),
\end{equation*}
which is more than just notationally convenient: $J_r(\rho,\theta)$ describes the instantaneous net reaction flux through reaction channel $r$ (sometimes also called the traffic in the literature). These fluxes are an essential ingredient to understand non-dissipative effects in MFT, as we shall see in Section~\ref{sec:MFT}.

Note that \eqref{eq:ODE} clearly conserves the total energy
\begin{equation}
  E^0\coloneqq \se\cdot\rho^0 + c_H \theta^0\equiv\se\cdot\rho(t) + c_H \theta(t),
\label{eq:total energy}
\end{equation}
so with fixed $\rho^0$ and $\theta^0$, the temperature is really a function of concentration. To shorten notation we can therefore omit dependencies on $\theta$, e.g. by a slight abuse $k_r(\rho)\coloneqq k_r\big(\rho,\mfrac{E^0-\se\cdot\rho}{c_H}\big)$.

\subsection{Microscopic model}

To reflect molecular fluctuations, the same phenomenon as above can be described by the usual microscopic model~\cite{Kurtz1972,AndersonCraciunKurtz2010}, with a small adaptation to account for temperature effects. 
The order of the total particle number in the system is controlled by the parameter $V$. This may physically be interpreted as the volume \emph{ratio} of the system size with respect to particle size (although these sizes will play no further role). Concentrations are defined as particle numbers, normalised by $V$. Initially we fix the temperature $\Theta\super{V}(0)\equiv\theta^0\in\lbrack0,\infty)$ and the concentrations $\rho\super{V}(0)=\rho\supernul{V}\in(\NN_0/V)^\X$ such that the initial concentrations converge to some limit:
\begin{equation*}
  \rho\supernul{V}\to \rho^0\qquad\text{as } V\to\infty.
\end{equation*}
A reaction $r$ occurs randomly with jump rate
\begin{align}
  Vk_r\super{V}(\rho,\theta)&\coloneqq V\kappa_r A\super{V}_r(\theta) B\super{V}_r(\rho),
  \label{eq:micro jump rates}\\
  A\super{V}_r(\theta) &\coloneqq  A_r(\theta)\mathds1_{\{Vc_H \theta\geq\se\cdot\gamma\super{r}\}},\notag\\
  B\super{V}_r(\rho)   &\coloneqq \frac{1}{V^{\lvert\alpha\super{r}\rvert_1}}\frac{(V\rho)!}{(V\rho-\alpha\super{r})!} \mathds1_{\{V\rho\geq\alpha\super{r}\}},\notag
\end{align}
denoting $\alpha!\coloneqq \prod_{x\in\X}\alpha_x!$. The expression for $B\super{V}_r$ is standard, based on combinatinorial arguments, and known to approximate the mass-action factor $B_r$~\cite{AndersonCraciunKurtz2010}. The two indicators -- where the inequalities are meant coordinatewise -- ensure that reactions leading to negative temperature respectively negative concentrations do not take place. However, zero temperatures or zero concentrations are not forbidden; this will

The effect of a microscopic reaction $r$ is a small change $V^{-1}\gamma\super{r}$ in concentrations and the release (or extraction) of a heat package of size $-V^{-1}\se\cdot\gamma\super{r}$. The pair $(\rho\super{V}(t),\Theta\super{V}(t))$ is then a Markov jump process in $\RR^\X \times\RR$ with generator:
\begin{align}
  (\hat\Q\super{V}F)(\rho,\theta) = V\sum_{r\in\R}\kappa_r A_r\super{V}(\theta) B\super{V}_r(\rho) \Big\lbrack F\big(\rho+\tfrac1V\gamma\super{r}, \theta - \tfrac1{Vc_H}\se\cdot\gamma\super{r}\big)-F(\rho,\theta)\Big\rbrack.
\label{eq:hatQ}
\end{align}
Just as for the macroscopic scale, the total energy is almost surely conserved:
\begin{equation}
  E\supernul{V}\coloneqq \se\cdot\rho\supernul{V} + c_H \theta^0 \equiv \se\cdot\rho\super{V}(t) + c_H \Theta\super{V}(t),
\label{eq:total energy micro}
\end{equation}
and so we may omit temperature dependencies.

By the classical Kurtz limit, the random process $(\rho\super{V},\theta\super{V})$ converges (weakly in Skorohod space) to the macroscopic trajectory $(\rho,\theta)$ solving~\eqref{eq:ODE}, \eqref{eq:total energy}. In this sense, the microscopic model shows vanishing random fluctuations around the macroscopic evolution.

Naturally it is desirable to include spatiality in our model. However, there are not many physically reasonable models in the literature that describe microscopic heat transport in liquids or gases. The classic KMP model \cite{KMP1982} and BEP models \cite{PeletierRedigVafayi2014} are specifically for crystals. The recent kinetic exclusion process~\cite{GutierrezHurtado2019} seems more suitable for solutes, but a detailed knowledge of the fluctuations is still unknown.

\subsection{Results and overview}
\label{subsec:goal}

This paper contains three main contributions.

I. In Section~\ref{sec:QP}, we study the invariant measure $\Pi\super{V}$ for the microscopic system, which can be constructed explicitly for a very simple example $\mathsf{A}\rightleftharpoons\mathsf{B}$. For this example as well as in a more general setting we show that $\Pi\super{V}$ satisfies a large-deviation principle as $V\to\infty$, formally written as (omitting temperature-dependencies):
\begin{equation}
  \Pi\super{V}\big( \rho\super{V} \approx \rho \big) \sim \exp\big(-V \V(\rho)\big)
\label{eq:LDP Pi}
\end{equation}
where $\rho$ is now a dummy variable, the \emph{quasipotential} is given by
\begin{align}
  \V(\rho) \coloneqq  \,&\S(\rho\mid\pi) - \frac{c_H}{k_B}\log\theta + \text{const.}, \qquad
  \theta\coloneqq \frac{E^0-\se\cdot\rho}{c_H}, \label{eq:QP}\\
  &\S(\rho\mid\pi)\coloneqq \sum_{x\in\X} s(\rho_x\mid \pi_x), \notag\\
  &s(a\mid b)\coloneqq  \begin{cases}
      a\log \frac{a}{b}-a+b, & a,b>0,\\
      b,                                             & a = 0,\\
      \infty,                                        & b=0, a>0 \text{ or } a<\infty,
    \end{cases}
    \label{eq:Boltzmann function}
\end{align}
and $\pi\in\RR^\X$ is the steady state concentration of the isothermal ODE, i.e.~\eqref{eq:ODE} with $A_r\equiv1$. This result only holds under the assumption of \emph{isothermal detailed balance}, or under the assumption of \emph{isothermal complex balance} with negligible or constant transition energy levels $\sa_r$. Physically, $\S(\rho\mid\pi)$ encodes the entropic contribution and $-\frac{c_H}{k_B}\log\theta$ the thermal one.

II. In Section~\ref{sec:QP}, we study a pathwise large-deviations principle and in Section~\ref{sec:Onsager-Machlup} we show that, under the assumption of detailed balance, it satisfies an Onsager-Machlup principle, following~\cite{MielkePeletierRenger2014}:
\begin{align}
  &\Prob\big( \rho\super{V}\approx \rho \big)\sim
  \exp\bigg(-V\times \notag\\
  &\quad\bigg\lbrack
     {\textstyle
       \int_0^T\!\hat\Psi\big(\rho(t),\dot\rho(t)\big)\,dt
     }
     {\textstyle
       +
       \int_0^T\!\hat\Psi^*\big(\rho(t),-\tfrac12\grad\V(\rho(t))\big)\,dt
       +
       \tfrac12\V\big(\rho(T)\big)-\tfrac12\V\big(\rho(0)\big)
     }\,
  \bigg\rbrack\bigg),
\label{eq:Onsager-Machlup}
\end{align}
where $\V$ is the quasipotential from~\eqref{eq:QP}, $\hat\Psi$ is a non-negative \emph{dissipation potential}, and $\hat\Psi^*$ is its convex dual with respect to the second variable. These microscopic fluctuations contain additional information about the macroscopic dynamics, for which the whole exponent of \eqref{eq:Onsager-Machlup} is $0$. Hence along solutions of~\eqref{eq:ODE} (always taking the gradient of $\hat\Psi^*$ with respect to the second entry):
\begin{equation}
  \dot\rho(t) = \grad\hat\Psi^*\big( \rho(t),-\tfrac12\grad\V(\rho(t))\big).
\label{eq:GF}
\end{equation}
The pair $\hat\Psi,\hat\Psi^*$ generalises dual squared norms $\tfrac12\lVert\cdot\rVert^2, \tfrac12\lVert\cdot\rVert^2_*$. We refer to Section~\ref{sec:Onsager-Machlup} for the details, but already mention that $\hat\Psi$ will indeed not be quadratic. Hence \eqref{eq:GF} is a \emph{nonlinear} relation between the force $-\tfrac12\grad\V$ and velocities $\dot\rho,\dot\theta$, which is also interpreted as a nonlinear generalisation of a gradient flow, driven by the quasipotential $\V$, and this gradient flow is uniquely defined by the Onsager-Machlup decomposition~\eqref{eq:Onsager-Machlup}. For the expected path~\eqref{eq:ODE} the exponent in \eqref{eq:Onsager-Machlup} sums up to $0$, yielding a balance between free energy loss and dissipation.

III. We then look beyond detailed balance, but in order to identify the quasipotential $\V$ we need to assume isothermal complex balance and constant $\sa_r$. Since a breaking of detailed balance can and will result in the occurrence of divergence-free fluxes, a force or energy balance can only be found when taking reaction fluxes into account. In this context, a reaction flux $J\super{V}_r(t)$ counts the net number of reactions through channel $r$ taking place at time $t$. Looking at large deviation of fluxes we enter the field of \emph{Macroscopic Fluctuation Theory} (MFT)~\cite{Bertini2015MFT}. We first derive the pathwise large deviations of the fluxes in Section~\ref{sec:LDP}. Following~\cite{RengerZimmer2021,PattersonRengerSharma2021TR}, one distinguishes between the \emph{symmetric force} $F^\sym=-\frac12\Gamma\tp\grad\V$ corresponding to the gradient flow part of the dynamics and the \emph{antisymmetric force} $F^\asym$. Note that $F^\sym$ is indeed similar to the force $-\tfrac12\grad\V$ from~\eqref{eq:GF}, but now with $\Gamma\tp$ which appears because the force acts on fluxes rather than velocities. In Section~\ref{sec:MFT} we first decompose the flux large deviations as: 
\begin{multline}
  \Prob\big( (\rho\super{V},J\super{V})\approx (\rho,j) \big) \\
  \sim \exp\bigg(-V \int_0^T\!\Big\lbrack
    \Psi\big(\rho(t),j(t)\big) 
  + \Psi^*\big(\rho(t),F^\sym(\rho(t))+F^\asym(\rho(t))\big) \\
    - F^\sym\big(\rho(t)\big)\cdot j(t) -F^\asym\big(\rho(t)\big)\cdot j(t) \,\Big\rbrack\,dt\bigg),
\label{eq:L decomp1}
\end{multline}
where $\Psi,\Psi^*$ are again non-quadratic dissipation potentials. Note that $-\int_0^T\!F^\sym\cdot j\,dt = \tfrac12\int_0^T\!\grad\V\cdot\Gamma j\,dt = \tfrac12\int_0^T\!\grad\V\cdot \dot\rho\,dt=\tfrac12\V(\rho(T))-\tfrac12\V(\rho(0))$ as in \eqref{eq:Onsager-Machlup}. By contrast, the term $-\int_0^T\!F^\asym\cdot j\,dt$ representing the work done by the antisymmetric force is path-dependent.

Next we use the notion of \emph{generalised orthogonality} developed in \cite{KaiserJackZimmer2018,RengerZimmer2021,PattersonRengerSharma2021TR} to further decompose 
\begin{align*}
  \Psi^*\big(\rho,F^\sym(\rho)+F^\asym(\rho)\Big) &= \Psi^*\big(\rho,F^\sym(\rho)\big) + \Lambda_\sym^\asym(\rho) \notag\\
  &= \Psi^*\big(\rho,F^\asym(\rho)\big) + \Lambda_\asym^\sym(\rho). 
\end{align*}
The two non-negative terms $\Lambda_\sym^\asym$ and $\Lambda_\asym^\sym$ are interpreted as generalisations of Fisher informations. Together with \eqref{eq:L decomp1} the flux large deviations split into terms for the symmetric (gradient flow) dynamics and terms for the antisymmetric dynamics, and since $\Psi^*$ is not quadratic, there are two different ways to do so. 

These decompositions are helpful to obtain estimates on the two work terms $\int_0^T\!F^\sym\cdot j\,dt$ and $\int_0^T\!F^\asym\cdot j\,dt$ and to extract compactness of paths and fluxes. The exact expressions, interpretation and applications will be given in Section~\ref{sec:MFT}.

\subsection{Three mathematical subtleties}
\label{subsec:subtleties}

Although we mostly apply existing mathematical techniques to a new setting, we encounter a number of mathematical subtleties that can not be disregarded as mere technicalities.
\begin{enumerate}

\item First of all, we are only able to construct the invariant measure $\Pi\super{V}$ for the very simple system of reactions $\mathsf{A}\rightleftharpoons\mathsf{B}$. We believe that a similar formula should hold for more general reaction networks, but the exact expression is not clear at this stage. For the more general setting we derive the quasipotential $\V$; this corresponds to the large-deviation principle~\eqref{eq:LDP Pi} of the invariant measure without actually knowing the invariant measure. Since the quasipotential~\eqref{eq:QP} decomposes into chemical and thermal factors, this suggests that (at least approximately) the invariant measure $\Pi\super{V}$ factorises into chemical and thermal factors, which is indeed confirmed for the simple example.

\item In usual reaction network theory one can impose macroscopic detailed or complex balance of a steady state in order to derive the invariant measure~\cite{AndersonCraciunKurtz2010}. This would be problematic when coupled to temperature, since the steady state does not have an explicit form; see the calculation at the end of Subsection~\ref{subsec:QP general}. Instead we assume macroscopic detailed and complex balance for the reaction network without a temperature coupling. In addition to (isothermal) complex balance we need to assume that $\sa_r$ is constant in $r$, for example corresponding to ``barrierless'' reactions. It should be said that the proof of our Proposition~\ref{prop:QP} shows that for isothermal complex balanced reactions with non-constant $\sa_r$, the quasipotential is generally not of the form~\eqref{eq:QP}.

\item Although rigorous proofs of pathwise large deviations tend to be quite technical, the rate functional can usually be derived formally by a simple calculation. The setting of our paper is an example where such naive calculation may fail. The reason is that once a path $\rho(t)$ reaches the zero temperature state, it can no longer escape that state with finite large-deviation cost, see~\cite{AAPR2021TR} and Subsection~\ref{subsec:cold death}. This phenomenon can be interpreted as a different type of cold death (different from the notion of cold death of the universe when it reaches maximal entropy). To circumvent this phenomenon we shall assume sufficient initial total energy, so that by energy conservation the zero-temperature state can never be reached.

\end{enumerate}

\section{Large deviations and the cold death}
\label{sec:LDP}

We first recall the pathwise large deviations for the process $\rho\super{V}(t)$, then comment on the cold death phenomenon, and finally introduce the pathwise large deviations for the fluxes.

\subsection{Pathwise large deviations}
\label{subsec:LDP conc}

The pathwise large deviations for the concentrations quantify the exponential rate of convergence of the trajectories, formally written as\footnote{For the sake of brevity we omit the rigorous definition, see for example~\cite[Th.~1]{AAPR2021TR}.}
\begin{equation}
  \Prob\Big( \rho\super{V} \approx \rho \Big) \sim e^{-V \I_{\lbrack0,T\rbrack}(\rho)} \qquad \text{as } V\to\infty,
\label{eq:LDP}
\end{equation}
where $\rho=(\rho(t))_{t\in\lbrack0,T\rbrack}$ is now an arbitrary trajectory. The functional $\I_{\lbrack0,T\rbrack}$ is called the rate functional; it is always non-negative, and $0$ for the macroscopic trajectory $\rho$ solving~\eqref{eq:ODE}. Physically, it quantifies the total free energy needed to deviate from the expected trajectory. 

The large-deviation principle can be formally calculated via standard procedures, see for example~\cite{Kipnis1999}, \cite{Feng2006} and the application to chemical reactions in~\cite{MielkePattersonPeletierRenger2016}. There is a large body of literature dedicated to making this statement rigorous~\footnote{See for example \cite{PR2019} and \cite{AAPR2021TR} and the cited papers therein.}, where finally a sharp condition was found in~\cite{AAPR2021TR}. As mentioned in Subsection~\ref{subsec:subtleties}, this condition is very relevant to our setting, and we need an additional assumption to prevent cold death. We first state the precise result and dedicate the next subsection to the discussion of the assumption.

Let
\begin{equation}
  \fS\coloneqq \Big\{ \rho=\rho^0 + \Gamma w: w\in \RR^{\R_\fw}  \text{ such that } \rho \in \lbrack0,\infty)^\X \text{ and } \theta=\mfrac{E^0-\se\cdot\rho}{c_H}\geq0  \Big\}
\label{eq:fS}
\end{equation}
be all non-negative concentrations that are attainable from the initial condition $(\rho^0,\theta^0)$ through reactions in $\R$. Similarly, let
\begin{align}
  \theta^- \coloneqq  \inf\Big\{\mfrac{E^0-\se\cdot\rho}{c_H}:\rho\in\fS \Big\},
&&
  \theta^+ \coloneqq  \sup\Big\{\mfrac{E^0-\se\cdot\rho}{c_H}:\rho\in\fS \Big\}
\label{eq:temp bounds}
\end{align}
be the minimal and maximal attainable temperatures. The large-deviation result is the following.

\begin{theorem}[\cite{AAPR2021TR}] 
Assume $\theta^0,\rho^0$ and $\R$ are such that $\fS$ is bounded and $\theta^->0$. Then $\rho\super{V}$ satisfies a large-deviation principle in $D(0,T;\RR^\X)$ with rate functional
\begin{equation*}
  \I_{\lbrack0,T\rbrack}(\rho)\coloneqq 
  \begin{cases}
    \int_0^T\!\hat\L\big(\rho(t),\dot\rho(t)\big)\,dt, & \rho \in W^{1,1}(0,T;\RR^\X), \quad \rho(0)=\rho^0,\\
    \infty, &\text{otherwise},
  \end{cases}
\end{equation*}
where
\begin{equation}
  \hat\L(\rho,u)\coloneqq 
      \inf_{\substack{j\in\RR^\R: \\ u=\sum_{r\in\R}j_r \gamma\super{r}}} \S\big(j\mid k(\rho)\big),
\label{eq:hatL}
\end{equation}
and $\S\big(j\mid k(\rho,\theta)\big) \coloneqq  \sum_{r\in\R} s\big(j_r\mid k_r(\rho)\big)$, recalling~\eqref{eq:Boltzmann function}.
\label{th:LDP conc}
\end{theorem}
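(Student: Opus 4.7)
The plan is to prove the LDP by going through a richer joint LDP for the pair $(\rho\super{V}, J\super{V})$, where $J\super{V}_r(t)$ denotes the net number of reactions through channel $r$ on $[0,t]$, rescaled by $V^{-1}$. Because $\rho\super{V}(t) = \rho\super{V}(0) + \Gamma J\super{V}(t)$ and the map $(\rho,j)\mapsto \rho$ is continuous on Skorohod space, the contraction principle collapses the flux-level rate $\int_0^T\S(\dot j(t)\mid k(\rho(t)))\,dt$ into exactly the minimisation~\eqref{eq:hatL}. So the core of the proof is a flux LDP, plus the contraction.

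For the flux LDP I would follow the Feng-Kurtz / Kipnis-Landim template. The pair $(\rho\super{V}, J\super{V})$ is Markov with jumps in channel $r$ at rate $Vk_r\super{V}(\rho,\theta)$, and Dynkin's formula produces, for every bounded predictable $\phi:[0,T]\to\RR^{\R}$, the exponential martingale
\[
  M\super{V}_\phi(t) = \exp\Bigl(V\sum_r \int_0^t \phi_r(s)\,dJ\super{V}_r(s) - V\sum_r \int_0^t (e^{\phi_r(s)}-1)\,k_r\super{V}(\rho\super{V}(s))\,ds\Bigr).
\]
A Chebyshev argument applied to $M\super{V}_\phi$ yields the upper bound; the lower bound is obtained by tilting the law by $M\super{V}_\phi$ for a carefully chosen $\phi$ and showing that the tilted process concentrates on the candidate trajectory. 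Pointwise optimisation over $\phi$ produces the Legendre dual of $\phi\mapsto\sum_r(e^{\phi_r}-1)k_r(\rho)$, which is exactly $\sum_r s(\dot j_r \mid k_r(\rho))$. Boundedness of $\fS$ gives uniform bounds on $k_r$, and $\theta^->0$ together with boundedness of $\fS$ makes the indicators in~\eqref{eq:micro jump rates} asymptotically trivial, so that $k_r\super{V}\to k_r$ uniformly on $\fS$, allowing the passage from prelimit rates to the limit Lagrangian.

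The principal obstacle, and the reason the sharp criterion of~\cite{AAPR2021TR} is needed, is the degeneracy of the jump rates on the coordinate boundary $\{\rho_x=0\}$. When $\alpha\super{r}_x>0$ the factor $B_r(\rho)$ vanishes, and by~\eqref{eq:Boltzmann function} any nonzero flux against a vanishing rate carries infinite cost. Consequently the naive Lagrangian need not be lower semicontinuous, and for the lower bound one must construct a recovery sequence: every candidate trajectory touching the boundary has to be approximated by interior trajectories whose rate values converge to the claimed one. This is precisely what is carried out in~\cite{AAPR2021TR}. The analogous degeneracy in temperature is the cold-death phenomenon: the Arrhenius factor $A_r(\theta) = \theta^q\exp(-(\sa_r-\se\cdot\alpha\super{r})/(k_B\theta))$ decays to zero as $\theta\downarrow 0$, so a trajectory reaching the zero-temperature state cannot escape at finite large-deviation cost. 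The hypothesis $\theta^->0$ rules this out from the outset and is the natural temperature-side analogue of the reachability condition in~\cite{AAPR2021TR}.

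The remaining ingredient is exponential tightness in $D(0,T;\RR^\X)$, which follows from boundedness of $\fS$ and an exponential modulus-of-continuity estimate: applying $M\super{V}_\phi$ with constant $\phi$ on short subintervals controls the number of jumps per interval with the required exponential tails. Combined with the Legendre-duality identification of $\hat\L$ and the contraction from fluxes to concentrations, this delivers the LDP with the stated good rate functional.
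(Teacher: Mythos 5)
Your sketch follows essentially the same route the paper points to: Theorem~\ref{th:LDP conc} is imported from \cite{AAPR2021TR}, and the paper itself makes explicit via~\eqref{eq:contracted L} that $\hat\L$ is the contraction of the flux Lagrangian $\L$, so proving a flux-level LDP and pushing forward through the continuous linear map $w\mapsto\rho^0+\Gamma w$ is the intended path. Your description of the Feng--Kurtz exponential-martingale template, the Legendre-duality identification of $\S(\cdot\mid k(\rho))$, exponential tightness from boundedness of $\fS$, and the role of $\theta^->0$ in ruling out the cold-death boundary are all aligned with the paper's narrative in Subsection~\ref{subsec:cold death}.

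Two small imprecisions worth fixing. First, you call the cumulative net count $J\super{V}_r(t)$, but in the paper's notation that object is $W\super{V}_r(t)$; $J\super{V}$ is reserved for its (distributional) time derivative. Second, the phrase ``any nonzero flux against a vanishing rate carries infinite cost'' conflates the pointwise Lagrangian with the time-integrated rate functional: $s(j_r\mid 0)=\infty$ for $j_r>0$, yet trajectories leaving the set $\{B_r=0\}$ can still have \emph{finite} integral cost precisely when the logarithmic-integrability condition~\eqref{eq:AAPR condition} holds (this is the content of Example~\ref{ex:cell division}). The correct dichotomy is not lower semicontinuity of the Lagrangian per se, but whether $-\int_0^\tau\log k_r(\rho^{\mathrm{bd}}+\tilde\tau g)\,d\tilde\tau\to0$: the polynomial factors $B_r$ satisfy this, while the Arrhenius factor $A_r$ near $\theta=0$ does not, which is exactly why $\theta^->0$ must be assumed rather than handled by a recovery argument.
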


Here and throughout the paper, a hat (\string^) will be used to distinguish functions that depend on concentrations $\rho$ (and its time derivative) from functions that also depend on fluxes, introduced in Subsection~\ref{subsec:flux ldp}.

\subsection{Boundary escape and the cold death}
\label{subsec:cold death}

The required boundedness of $\fS$ in Theorem~\ref{th:LDP conc} implies that the rates $k_r$ are uniformly bounded from above. Therefore the jump rates $k\super{V}$ are also bounded and so there are almost surely a finite number of jumps and the process does not explode. However, the crucial challenge of \cite{AAPR2021TR} and previous works on this large-deviation principle is that the rates $k_r$ are generally not lower-bounded away from zero. The condition $\theta^->0$ assures that zero temperature, where the Arrhenius factors vanish $A_r(0)=0$, can never be reached. By contrast, concentrations for which $B_r(\rho)=0$ are not problematic in this respect, since the mass-action factors $B_r(\rho)$ vanish sufficiently slow. We explain this through two simple examples.

\begin{example}[nonreversible cell division]
In this case $\X\coloneqq \{\mathsf{A}\}, \R\coloneqq \{\mathds1_\mathsf{A}\to2\mathds1_\mathsf{A}\}$ and $\se_\mathsf{A}\coloneqq 0$ so that the temperature is kept constant and the condition $\theta^->0$ is superfluous. Starting from $\rho^0_\mathsf{A}\coloneqq 0$, $\theta^0\coloneqq 1$, the solution to the macroscopic equation~\eqref{eq:ODE} will simply be $\rho_{\mathsf{A}}(t)\equiv0$. However, the macroscopic initial condition $\rho^0_\mathsf{A}=0$ would be consistent with the microscopic condition $\rho\supernul{V}_{\mathsf{A}}=1/V$, meaning that initially there is one particle, which is not observed on the macroscopic scale $V\to\infty$. Although not the expected behaviour, there is a small probability that this one particle causes a chain reaction, resulting in the emergence -- within finite time --  of a concentration $\bigoh(\rho\super{V}_{\mathsf{A}}(1))=1$ of macroscopic order. Since $B_r(\rho)$ grows polynomially, it turns out that this probability has finite large-deviation cost~\cite[Sec.~3]{AAPR2021TR}: there exists a path $\rho_\mathsf{A}(t)$ with $\rho_\mathsf{A}(0)=\rho^0_\mathsf{A}=0$ and $\rho_\mathsf{A}(T)>0$ such that $\I_{\lbrack0,T\rbrack}(\rho)<\infty$. Physically, this means that the process can `escape' the boundary $\rho_{\mathsf{A}}=0$ if a sufficient amount of free energy is injected in the system. 
\label{ex:cell division}
\end{example}

\begin{example}[heating the room]
We consider again one species $\X\coloneqq \{\mathsf{A}\}$, but now $\R\coloneqq \{\mathds1_\mathsf{A}\to0\}$ and $\se_\mathsf{A}>0$. Initially we set $\rho^0_\mathsf{A}=\rho\super{V}_\mathsf{A}(0)\coloneqq 1$, $\theta^0\coloneqq 0$ and $\Theta\super{V}(0)=\theta\supernul{V}\coloneqq a/V$ for some $a>0$ arbitrarily large but of order $1$. Similar to the previous example, the solution to the macroscopic equation~\eqref{eq:ODE} is constant $(\rho_\mathsf{A}(t),\theta(t))\equiv(0,0)$, but there is some heat available to start the reaction, increase the temperature and thereby accelerate the process (at least as long as there is enough mass available). However, in this case it can be calculated explicitly that the probability of a sufficiently strong chain reaction corresponds to an infinite large-deviation cost: for any path $(\rho_\mathsf{A},\theta(t))$ with $(\rho_{\mathsf{A}}(0),\theta(0))=(\rho^0_\mathsf{A},\theta^0)=(1,0)$ and $\theta(T)>0$ there holds $\I_{\lbrack0,T\rbrack}(\rho)<\infty$ \cite[Sec.~5]{AAPR2021TR}. Physically, this simple model shows that a room at $0$ Kelvin can not be heated with fire, since there not enough heat to start the fire, even when injecting free energy in the system. More precisely, the enormous amount of free energy that would need to be injected in to start the fire is of a higher scale. 
\label{ex:heating}
\end{example}

Why can the system of the first example escape the crucial boundary with finite large-deviation cost, but the second system cannot? The necessary and sufficient condition to escape a boundary point with large-deviation cost is that \cite{AAPR2021TR}:
\begin{equation}
  \lim_{\tau\to0} -\int_0^{\tau}\!\log k_r(\rho^\mathrm{bd}+\tilde\tau g)\,d\tilde\tau=0,
\label{eq:AAPR condition}
\end{equation}
where $\rho^\mathrm{bd}$ is a boundary point of $\{\rho\in\fS:k_r(\rho)>0\}$ and $g$ is an inward-pointing vector.

For the first example, $\rho^\mathrm{bd}_\mathsf{A}\coloneqq 0$, and we can simply take $g\coloneqq \mathds1_\mathsf{A}$. Focussing on the crucial factor $B_r$ of the reaction rate $k_r$, we see that indeed:
\begin{equation*}
  -\int_0^{\tau}\!\log B_r(\rho^\mathrm{bd}+\tilde\tau g)\,d\tilde\tau=-\int_0^{\tau}\!\log (\tilde\tau)\,d\tilde\tau= -\tau\log\tau+\tau\to0,
\end{equation*}
For the second example $\rho_\mathsf{A}^\mathrm{bd}\coloneqq E^0/\se_\mathsf{A}$ and for an inward pointing vector we can take $g\coloneqq -c_H/\se_\mathsf{A}\mathds1_\mathsf{A}$. Then, again focussing on the crucial factor $A_r$ of $k_r$:
\begin{equation*}
  -\int_0^{\tau}\!\log A_r(\rho^\mathrm{bd}+\tilde\tau g)\,d\tilde\tau=\frac{c_H(\sa-\se_\mathsf{A})}{k_B\se_\mathsf{A}} \int_0^{\tau}\!\frac1{\tilde\tau}\,d\tilde\tau\equiv\pm\infty,
\end{equation*}
and so the reaction rate vanishes too fast near the boundary in order to escape with finite cost. It is interesting to note that in this respect the Arrhenius law is exactly a border case which does not allow escaping the boundary $\theta=0$.

In order to circumvent this issue we assume $\theta^->0$ in Theorem~\ref{th:LDP conc}.

\subsection{Flux large deviations}
\label{subsec:flux ldp}

In order to generalise the physical structure beyond detailed balance, we also study large deviations of fluxes. At this stage we require reversibility of the reaction network (each forward reaction corresponds to a backward reaction), permitting the use of net fluxes. To be more precise, let
\begin{align*}
  W\super{V}_r(t)&\coloneqq  \frac1V\#\big\{\text{reactions } r \text{ occurred in } (0,t\rbrack\big\} \\
  &\qquad - \frac1V\#\big\{\text{reactions } \bw(r) \text{ occurred in } (0,t\rbrack\big\}, \qquad r\in\R_\fw.
\end{align*}
be the cumulative net reaction flux. 

As mentioned in Subsection~\ref{subsec:cold death}, there are almost surely a finite number of jumps. Therefore, the paths $W\super{V}$ are almost surely of bounded variation, and we may define the time derivative $J\super{V}\coloneqq \dot W\super{V}$, which is the flux from Subsection~\ref{subsec:goal}. This flux is however a singular measure in time, so for now it is more convenient to work with the cumulative net flux $W\super{V}(t)$, which is a Markov process in $\RR^\R$ with initial condition $W\super{V}(0)\equiv0$ and generator:
\begin{multline*}
  (\Q\super{V}F)(w) = V\sum_{r\in\R_\fw}  \Big\lbrack \kappa_r A_r(\theta) B\super{V}_r(\rho) \big\lbrack F(w+\tfrac1V\mathds1_r)-F(w)\big\rbrack\\
  +\kappa_{\bw(r)} A_{\bw(r)}(\theta) B\super{V}_{\bw(r)}(\rho) \big\lbrack F(w-\tfrac1V\mathds1_r)-F(w)\big\rbrack \Big\rbrack,
\end{multline*}
where the cumulative net flux $w$ determines the concentration through the \emph{continuity equation} $\rho  \coloneqq   \rho\supernul{V} + \Gamma w$ (recall the definition of $\Gamma$ from \eqref{eq:Gamma}), and again $\theta \coloneqq  (E\supernul{V}-\se\cdot\rho)/c_H$ . On an abstract level one then sees that the condition~\eqref{eq:AAPR condition} is the same as for the process~$\rho\super{V}$, yielding the flux large-deviation principle:

\begin{theorem}[\cite{PR2019,AAPR2021TR}]
Assume $\theta^0,\rho^0$ and $\R$ are such that $\fS$ is bounded and $\theta^->0$. Then the process $W\super{V}$ satisfies a large-deviation principle in $D(0,T;\RR^\X)$ with rate functional
\begin{align}
  \J_{\lbrack0,T\rbrack}(w)&\coloneqq 
  \begin{cases}
    \int_0^T\!\L\big(\rho(t),\dot w(t)\big)\,dt, & w \in W^{1,1}(0,T;\RR^{\R_\fw}), \quad w(0)=0,\\
                                                  &\rho\coloneqq \rho^0+\Gamma w,\\
    \infty, &\text{otherwise}, 
  \end{cases}\label{eq:J}
\intertext{where}
  \L\big(\rho,j)&\coloneqq \inf_{\substack{\tilde\jmath\in\lbrack0,\infty)^\R: \\ \forall r\in\R_\fw\quad j_r=\tilde\jmath_r-\tilde\jmath_{\bw(r)}}} \quad \S\big(\tilde\jmath\mid k(\rho,\theta)\big).
  \label{eq:L}
\end{align}
\label{th:LDP flux}
\end{theorem}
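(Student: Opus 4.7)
My plan is to reduce Theorem~\ref{th:LDP flux} to the general large-deviation results of \cite{PR2019,AAPR2021TR} for chemical-reaction jump processes with vanishing rates. Under the energy conservation~\eqref{eq:total energy micro} and the continuity equation $\rho = \rho\supernul{V}+\Gamma w$, the cumulative net flux $W\super{V}$ is a Markov jump process on $\RR^{\R_\fw}$ whose generator $\Q\super{V}$ has already been written down in the paragraph preceding the statement. The jump rates $Vk\super{V}_r$ and $Vk\super{V}_{\bw(r)}$ are uniformly bounded from above because $\fS$ is bounded and $A_r(\theta)$ is continuous on the compact interval $[\theta^-,\theta^+]$. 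Hence the process almost surely performs finitely many jumps; standard exponential-martingale arguments then give exponential tightness in $D(0,T;\RR^{\R_\fw})$.

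Second, I would identify the rate functional via the Feng--Kurtz Hamiltonian. The nonlinear generator associated to $\Q\super{V}$ yields
\begin{equation*}
  H(\rho,\xi) = \sum_{r\in\R_\fw}\Big\lbrack k_r(\rho)(e^{\xi_r}-1)+k_{\bw(r)}(\rho)(e^{-\xi_r}-1)\Big\rbrack,
\end{equation*}
and a direct computation shows that its Legendre dual in $\xi$, decoupled channel by channel, equals
\begin{equation*}
  \L(\rho,j)=\sum_{r\in\R_\fw}\inf_{\substack{\tilde\jmath_r,\tilde\jmath_{\bw(r)}\geq0\\ \tilde\jmath_r-\tilde\jmath_{\bw(r)}=j_r}}\Big\lbrack s\big(\tilde\jmath_r\mid k_r(\rho)\big)+s\big(\tilde\jmath_{\bw(r)}\mid k_{\bw(r)}(\rho)\big)\Big\rbrack,
\end{equation*}
which is precisely the expression in~\eqref{eq:L}. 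The initial condition $w(0)=0$ and the absolute continuity requirement $w\in W^{1,1}$ come, as usual, from the fact that paths with finite rate cost cannot have atomic jumps on the macroscopic scale.

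Third, I would invoke the sharp boundary-escape criterion~\eqref{eq:AAPR condition} of \cite{AAPR2021TR} to close the matching upper and lower bounds. Reaction rates $k_r(\rho,\theta)$ can vanish at two types of boundaries of the effective state space: where $B_r(\rho)=0$, handled exactly as in Example~\ref{ex:cell division} and already covered by \cite{PR2019}, or where $A_r(\theta)=0$, which is precisely the Arrhenius borderline case of Example~\ref{ex:heating} that \emph{violates} \eqref{eq:AAPR condition}. The hypothesis $\theta^->0$ combined with the energy-conservation identity $\theta=(E\supernul{V}-\se\cdot\rho)/c_H$ (which holds pathwise almost surely) uniformly keeps the process and all $w$ with $\J_{\lbrack0,T\rbrack}(w)<\infty$ inside the region $\theta\in\lbrack\theta^-,\theta^+\rbrack$, so the Arrhenius factor is bounded away from zero along every candidate path and the pathological boundary is excluded.

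The main obstacle is exactly this coupling between the flux variable $w$ and the temperature $\theta(w)$: while the mass-action boundaries are standard, one must verify that the cold-death threshold is not just avoided by the deterministic trajectory but by every path of finite cost. This is where the choice of $\fS$ in \eqref{eq:fS} is important — $\fS$ is defined through \emph{both} $\rho\geq0$ and $\theta\geq0$, so that any $w\in W^{1,1}$ with $\J_{\lbrack0,T\rbrack}(w)<\infty$ produces $\rho(t)=\rho^0+\Gamma w(t)\in\fS$ and hence $\theta(t)\geq\theta^->0$. Once this uniform lower bound on the Arrhenius factor is established, the remainder of the proof is the standard upper-bound/compactness/lower-bound scheme of \cite{AAPR2021TR}, applied to the flux generator rather than the concentration generator.
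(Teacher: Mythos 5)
Your proposal follows essentially the same route the paper takes: the paper itself offers no independent proof but observes, just above the statement, that the jump rates of the cumulative-flux process $W\super{V}$ depend on $w$ only through $\rho=\rho^0+\Gamma w$, so the sharp boundary-escape criterion~\eqref{eq:AAPR condition} from \cite{AAPR2021TR} reduces to the one already verified for $\rho\super{V}$ and the result follows from \cite{PR2019,AAPR2021TR}. Your elaboration — bounded rates from boundedness of $\fS$ and compactness of $\lbrack\theta^-,\theta^+\rbrack$, the Feng--Kurtz Hamiltonian with its channelwise Legendre dual recovering the infimal-convolution form~\eqref{eq:L}, and the observation that $\theta^->0$ excludes the Arrhenius cold-death boundary while the mass-action boundaries still satisfy~\eqref{eq:AAPR condition} — is a correct fleshing-out of exactly that argument.
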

The infimum in \eqref{eq:L} appears by a ``contraction principle'' since we work with net rather then one-way fluxes. By the same principle, \eqref{eq:hatL} is related to \eqref{eq:L} through 
\begin{equation}
  \hat\L(\rho,u)=\inf_{j\in\R_\fw: u=\Gamma j} \L(\rho,j).
\label{eq:contracted L}
\end{equation}


\section{Invariant measure and quasipotential}
\label{sec:QP}

This section is dedicated to the invariant measure $\Pi\super{V}$ of the microscopic process, and its large-deviation rate $\V$, also called the quasipotential\footnote{Actually, the most common notion of quasipotential as used in \cite{Freidlin2012} is slightly different. There are known cases where that notion does not coincide with the large-deviation rate of the invariant measure, even if the macroscopic dynamics has a unique basin of attraction~\cite{YasodharanSundaresan2021TR}.}.
For a simple unimolecular reaction network $\mathsf{A}\rightleftharpoons\mathsf{B}$ the invariant measure will be explicitly constructed in Subsection~\ref{subsec:unim}, and its large-deviation rate is calculated in Subsection~\ref{subsec:QP unim}. For more general networks, the invariant measure remains unknown, but the quasipotential can still be derived indirectly via a Hamilton-Jacobi-Bellman equation; this will be the content of Subsection~\ref{subsec:QP general}.

\subsection{The invariant measure for the unimolecular case}
\label{subsec:unim}

In this subsection we explicitly construct the invariant measure for the simple setting $\X\coloneqq \{\mathsf{A},\mathsf{B}\}$ and $\R\coloneqq \{\fw,\bw\}\coloneqq \{\mathds1_\mathsf{A}\to\mathds1_\mathsf{B},\mathds1_\mathsf{B}\to\mathds1_\mathsf{A}\}$, in chemical notation: $\mathsf{A}\rightleftharpoons\mathsf{B}$. To simplify notation we write $\sa\coloneqq \sa_\fw=\sa_\bw$ and without loss of generality we assume that $V$ is the total number of particles in the system (conserved by the dynamics). It will be helpful to rewrite all variables in terms of the number $i$ of $\mathsf{A}$-particles (for fixed V):
\begin{align*}
  \rho\lbrack i\rbrack\coloneqq \begin{bmatrix} \tfrac1V i \\ 1-\tfrac1Vi \end{bmatrix},
  &&
  \theta\lbrack i\rbrack\coloneqq \frac{E\supernul{V}-\se\cdot\rho\lbrack i\rbrack}{c_H} 
  = 
  \frac{E\supernul{V}-\se_\mathsf{B}}{c_H} + \frac{\se_\mathsf{B}- \se_\mathsf{A}}{c_H V}i,
\end{align*}
recalling the total energy defined in~\eqref{eq:total energy micro}. Define $i^-\coloneqq \min\{i=0,\hdots,V:\theta\lbrack i\rbrack\in \lbrack\theta^-,\theta^+\rbrack\}$ and similarly $i^+$ as the maximum. Then the process $(\rho\super{V},\Theta\super{V})$ can be interpreted as a birth-death process on the state space $\{i^-,\hdots,i^+\}$. 

For a birth-death process the invariant measure $\Pi\super{V}$ can be explicitly constructed using detailed balance as Ansatz:
\begin{align}
  \Pi\super{V}(\rho\lbrack i\rbrack) &= \Pi\super{V}(\rho\lbrack i-1\rbrack)\frac{k\super{V}_\bw(\rho\lbrack i-1\rbrack,\theta\lbrack i-1\rbrack)}{k\super{V}_\fw(\rho\lbrack i\rbrack,\theta\lbrack i\rbrack)}  \notag\\
    &= \Pi\super{V}(\rho\lbrack i^-\rbrack) \frac{\prod_{l=i^-}^{i-1} \kappa_\bw A_\bw(\theta\lbrack l\rbrack)B_\bw(\rho\lbrack l\rbrack)}{\prod_{l=i^-+1}^i\kappa_\fw A_\fw(\theta\lbrack l\rbrack) B_\fw(\rho\lbrack l\rbrack)}.
\label{eq:unim DB Ansatz}
\end{align}
From this we see that the measure factorises into chemical and thermal factors, so we can set
\begin{align}
  \Pi\super{V}(\rho\lbrack i\rbrack)=:\frac{F\super{V}(\theta\lbrack i\rbrack)G\super{V}(\rho\lbrack i\rbrack)}{Z\super{V}},
  &&
  Z\super{V}\coloneqq \sum_{i=i^-}^{i^+} F\super{V}(\theta\lbrack i\rbrack) G\super{V}(\rho\lbrack i\rbrack),
  \label{eq:unim inv meas}
\end{align}
for some $F\super{V}, G\super{V}$ that can be studied separately. Note that $\Pi\super{V}(\rho\lbrack i^-\rbrack)$ and possible other constants that we encounter can be absorbed in the normalisation factor $Z\super{V}$. 

The chemical factor has the Poisson-product form that is the invariant measure for the isothermal reaction network (see for example~\cite{AndersonCraciunKurtz2010}):
\begin{align*}
  \frac{\prod_{l=i^-}^{i-1} \kappa_\bw B_\bw(\rho\lbrack l\rbrack)}{\prod_{l=i^-+1}^i\kappa_\fw B_\fw(\rho\lbrack l\rbrack)} 
  &=\frac{\prod_{l=i^-}^{i-1} \kappa_\bw (v-l)}{\prod_{l=i^-+1}^i \kappa_\fw l} \\
  &=\big(\mfrac{\kappa_\bw}{\kappa_\fw}\big)^{i-i^-} \frac{(V-i^-)!/(V-i)!}{i!/i^-!} \\
  &=\text{constant}\times \underbrace{\prod_{x=\mathsf{A},\mathsf{B}}\frac{(V\pi_x)^{V\rho_x\lbrack i\rbrack}}{(V\rho_x\lbrack i\rbrack)!} e^{-V\pi_x}}_{=:G\super{V}(\rho\lbrack i \rbrack)}.
\end{align*}
setting $\pi\coloneqq (\kappa_\bw,\kappa_\fw)/\lvert(\kappa_\bw,\kappa_\fw)\rvert_2$.

For the thermal factor, we obtain:
\begin{align*}
  \frac{\prod_{l=i^-}^{i-1} A_\bw(\theta\lbrack l\rbrack)}{\prod_{l=i^-+1}^i A_\fw(\theta\lbrack l\rbrack)} &= 
  \frac{A_\bw(\theta\lbrack i^-\rbrack)}{A_\fw(\theta\lbrack i\rbrack)}
    \prod_{l=i^-+1}^{i-1}  \frac{ A_\bw(\theta\lbrack l\rbrack)}{A_\fw(\theta\lbrack l\rbrack)}\\
  &= \exp\Big(-\frac{\sa-\se_\mathsf{B}}{k_B\theta\lbrack i^-\rbrack}\Big)   \exp\Big(\frac{\sa-\se_\mathsf{A}}{k_B\theta\lbrack i\rbrack}\Big) \exp\Big( \sum_{l=i^-+1}^{i-1} \frac{\se_\mathsf{B}-\se_\mathsf{A}}{k_B\theta\lbrack l\rbrack} \Big)\\
  &=: F\super{V}(\theta\lbrack i\rbrack).
\end{align*}
By construction \eqref{eq:unim inv meas} satisfies the Ansatz~\eqref{eq:unim DB Ansatz} and hence the process is in detailed balance with respect to the invariant measure $\Pi\super{V}$.

Motivated by the results in Subsection~\ref{subsec:QP general}, we expect a similar formula for the invariant measure to hold more generally when the isothermal reaction network is in detailed balance, or if it is in complex balance and $\sa$ is constant.

\subsection{The quasipotential for the unimolecular case}
\label{subsec:QP unim}

Still restricting to the simple setting $\mathsf{A}\rightleftharpoons\mathsf{B}$ we derive the quasipotential $\V$ as the large-deviation rate function corresponding to the invariant measure $\Pi\super{V}$ constructed in the previous subsection. As explained in \cite[Sec.~2]{MielkePattersonPeletierRenger2016} it corresponds to the (nondimensionalised) physical free energy per unit volume. Moreover, the function $\V$ will generally be a Lyapunov function for the macroscopic dynamics, and -- in case the Onsager-Machlup principle holds -- also the driving energy, see Sections~\ref{sec:Onsager-Machlup} and ~\ref{sec:MFT}.

Since the large-deviation principle~\eqref{eq:LDP Pi} is a finite-dimensional problem we present a formal but direct calculation and skip the minor technicalities required to make this into a rigorous statement. Pick a concentration-temperature pair $(\rho,\theta)\in\lbrack0,\infty)^\X\times\lbrack0,\infty)$ and a sequence $(i\super{V})_{V>0}$ in $\{i^-,\hdots,i^+\}$ so that $(\rho\lbrack i\super{V}\rbrack,\theta\lbrack i\super{V}\rbrack)\to(\rho,\theta)$ as $V\to\infty$, adopting the notation of the previous subsection. Then
\begin{multline*}
  -\frac1V\log\Pi\super{V}\big( \rho\lbrack i\super{V}\rbrack,\theta\lbrack i\super{V}\rbrack\big) \stackrel{\eqref{eq:unim inv meas}}{=} \\ 
  -\frac1V\log F\super{V}\big(\theta\lbrack i\super{V}\rbrack\big) -\frac1V\log G\super{V}\big(\rho\lbrack i\super{V}\rbrack\big) + \frac1V\log Z\super{V}.
\end{multline*}
It is well known and easily seen by Stirling's formula that
\begin{equation*}
  \lim_{V\to\infty} -\frac1V\log G\super{V}\big(\rho\lbrack i\super{V}\rbrack\big) = \S(\rho\mid\pi).
\end{equation*}
For the thermal contribution note that $i^-$ and $\theta\lbrack\cdot\rbrack$ defined in the previous subsection depend on $V$ and that $\theta\lbrack i^-\rbrack\to\theta^-$ as $V\to\infty$. By a Riemann integral approximation:
\begin{align*}
  \lim_{V\to\infty} -\frac1V\log F\super{V}\big(\theta\lbrack i\super{V}\rbrack\big) 
  & = \lim_{V\to\infty} -\frac1V \sum_{l=i^-+1}^{i-1} \frac{\se_\mathsf{B}-\se_\mathsf{A}}{k_B\theta\lbrack l\rbrack} \\
  & = \lim_{V\to\infty} -\frac1V \sum_{l=i^-+1}^{i-1} \frac{\se_\mathsf{B}-\se_\mathsf{A}}{k_B\big( \frac{E^0-\se_\mathsf{B}}{c_H} + \frac{\se_\mathsf{B}- \se_\mathsf{A}}{c_H }\rho\lbrack l\rbrack\big)} \\
  & = - \frac{\se_\mathsf{B}-\se_\mathsf{A}}{k_B} \int_{\lim_{V\to\infty}\rho\lbrack i^-\rbrack}^{\rho}\!\frac{1}{\frac{E^0-\se_\mathsf{B}}{c_H} + \frac{\se_\mathsf{B}- \se_\mathsf{A}}{c_H }\tilde\rho}\,d\tilde\rho \\
  & = - \frac{c_H}{k_B} \log \frac{\theta}{\theta^-}.
\end{align*}
Putting all parts together indeed yields the claimed quasipotential~\eqref{eq:QP}.

\subsection{The quasipotential for the general case}
\label{subsec:QP general}

For the more general case as the simple unimolecular example, it still remains unknown what the exact expression of the invariant measure is, and therefore the direct calculation of its large-deviation rate functional as in the previous subsection cannot be used. However, we can still derive the quasipotential indirectly, without knowing the invariant measure. The key observation is that the invariant measure $\Pi\super{V}$ is related to the generator $\Q\super{V}$, and so the corresponding large-deviation rate $\V$ must be related to the large-deviation cost function $\hat\L$. We recall the exact relation without proof.
\begin{lemma}[{\cite[Th.~3.6]{PattersonRengerSharma2021TR}}]
If the invariant measure $\Pi\super{V}$ satisfies a large-deviation principle~\eqref{eq:QP} with rate function $\V$, then
\begin{align}
  \hat\H\big(\rho,\grad\V(\rho)\big)=0, && \inf\V=0,
  \label{eq:HJB}
\end{align}
where $\hat\H$ is the convex dual of $\hat\L$:
\begin{equation}
  \hat\H\big(\rho,\xi)\coloneqq \sup_{u\in\RR^\X} \xi\cdot u - \hat\L\big(\rho,u) = \sum_{r\in\R} k_r(\rho)\big( e^{\xi\cdot\gamma\super{r}}-1\big).
\label{eq:hatH}
\end{equation}
\label{lem:HJB}
\end{lemma}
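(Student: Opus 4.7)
The plan is to derive the HJB equation directly from the stationarity of $\Pi\super{V}$, combined with the large-deviation principle for $\Pi\super{V}$, via the standard exponential-tilt trick underlying Varadhan's lemma. The key test-function class is $F_V = e^{VG}$ for smooth bounded $G\colon\RR^\X\to\RR$: stationarity of $\Pi\super{V}$ gives the identity $\int e^{t\hat\Q\super{V}}F_V\,d\Pi\super{V}=\int F_V\,d\Pi\super{V}$ for all $t\geq 0$, and the HJB emerges by taking $V\to\infty$ and then $t\to 0^+$.

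The statement $\inf\V=0$ is a soft consequence of $\Pi\super{V}$ being a probability measure concentrated on the bounded (hence precompact) set $\fS$. Applying the LDP upper bound to $\fS$ yields $1=\Pi\super{V}(\fS)\leq\exp\bigl(-V\inf_\fS\V+o(V)\bigr)$, forcing $\inf\V\leq 0$; since rate functions are non-negative, $\inf\V=0$.

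For the HJB itself, the first step is the pointwise computation
\begin{equation*}
  V^{-1}e^{-VG(\rho)}(\hat\Q\super{V}e^{VG})(\rho)
  = \sum_{r\in\R} k_r\super{V}(\rho)\bigl(e^{V[G(\rho+V^{-1}\gamma\super{r})-G(\rho)]}-1\bigr)
  \;\xrightarrow{V\to\infty}\; \hat\H(\rho,\grad G(\rho)),
\end{equation*}
obtained by Taylor expansion in the exponent together with the uniform convergence $k_r\super{V}\to k_r$ on $\fS$. This identifies $\hat\H$ as the logarithmic generator of the semigroup $H^V_t G\coloneqq V^{-1}\log e^{t\hat\Q\super{V}}e^{VG}$, and a Feng--Kurtz type argument gives convergence to the Nisio semigroup $H_t G = G + t\,\hat\H(\cdot,\grad G)+o(t)$ as $t\to 0^+$. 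Applying Varadhan's lemma to the stationarity identity produces $\sup_\rho\bigl[H^V_t G-\V\bigr]+o_V(1)=\sup_\rho\bigl[G-\V\bigr]$; passing $V\to\infty$ and differentiating at $t=0^+$ at any interior maximiser $\rho^\ast$ of $G-\V$ yields $\hat\H(\rho^\ast,\grad G(\rho^\ast))=0$. Since $\grad G(\rho^\ast)=\grad\V(\rho^\ast)$ at such a maximiser, varying $G$ yields $\hat\H(\rho,\grad\V(\rho))=0$ at every interior $\rho\in\{\V<\infty\}$.

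The main obstacle is making the Varadhan-type passage rigorous: the convergence $V^{-1}e^{-VG}\hat\Q\super{V}e^{VG}\to\hat\H(\cdot,\grad G)$ must be controlled uniformly on $\fS$, and one must handle the vanishing of $k_r\super{V}$ near the boundary (the same cold-death-type degeneracy discussed in Subsection~\ref{subsec:cold death}) where Varadhan-type estimates are delicate. Moreover, $\V$ need not be classically differentiable at the boundary or on level sets where the macroscopic flow is tangential, so the equation $\hat\H(\rho,\grad\V)=0$ should properly be interpreted in the viscosity sense, testing $\V$ against smooth functions touching it from above or below and applying the argument above only at the touching points. This viscosity framework is precisely the one developed in \cite{PattersonRengerSharma2021TR}.
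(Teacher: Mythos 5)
The paper does not actually prove this lemma: it explicitly says ``We recall the exact relation without proof'' and attributes it to \cite[Th.~3.6]{PattersonRengerSharma2021TR}. So there is no internal proof to compare against, and your proposal is judged on its own merits as a reconstruction of the cited result.

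Your overall route --- exploit stationarity of $\Pi\super{V}$ via exponential test functions $e^{VG}$, identify the nonlinear generator limit $V^{-1}e^{-VG}\hat\Q\super{V}e^{VG}\to\hat\H(\cdot,\grad G)$, and pass to the limit with a Feng--Kurtz/Varadhan argument --- is indeed the standard and correct conceptual approach, and your treatment of $\inf\V=0$ is clean and complete. The explicit formula for $\hat\H$ in \eqref{eq:hatH} also follows correctly from your pointwise Taylor expansion.

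However, there is a genuine gap in the step that produces the \emph{equality} $\hat\H(\rho^\ast,\grad G(\rho^\ast))=0$. From $\sup_\rho[H_tG-\V]=\sup_\rho[G-\V]$ for all $t\geq0$, the one-sided envelope inequality
\begin{equation*}
  0=\sup_\rho[H_tG-\V]-\sup_\rho[G-\V]\;\geq\;H_tG(\rho^\ast)-G(\rho^\ast)
\end{equation*}
at a maximiser $\rho^\ast$ of $G-\V$ gives, after dividing by $t$ and letting $t\to0^+$, only the inequality $\hat\H(\rho^\ast,\grad G(\rho^\ast))\leq 0$. This establishes that $\V$ is a viscosity subsolution (equivalently, $\hat\H(\rho,\grad\V(\rho))\leq0$ at interior points of differentiability), not the full equation. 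The reverse inequality cannot be extracted from the same computation without additional structure: stationarity of $\Pi\super{V}$ is a one-sided-in-time statement, so replacing $t$ by $-t$ is not available, and the upper envelope bound requires stability/uniqueness of the maximiser family $t\mapsto\rho^\ast_t$, which you have not supplied. The supersolution half is typically obtained by a separate argument --- for example by identifying $\V$ with the Freidlin--Wentzell infimal-action quasipotential and invoking the dynamic programming principle, or, in the detailed-balance case, from the time-reversal symmetry $\hat\H(\rho,\grad\V-\xi)=\hat\H(\rho,\xi)$ which forces $\hat\H(\rho,\grad\V)=\hat\H(\rho,0)=0$ directly. You should either supply this second half explicitly or at least flag that your argument only yields the subsolution inequality and that the full equality is where the technical content of \cite[Th.~3.6]{PattersonRengerSharma2021TR} resides.
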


From now on we focus on equation~\eqref{eq:HJB}, which is exactly what is needed for the use in Onsager-Machlup and Macroscopic Fluctuation Theory. In order to solve it, we will need to make assumptions about the isothermal setting, where $A_r(\theta)\equiv1$. For that setting, the \emph{stoichiometric simplex} captures all concentrations that can be reached through reactions in $\R$:
\begin{equation*}
  \fS^\iso\coloneqq \big\{ \rho=\rho^0 + \Gamma w: w\in \RR^{\R_\fw} \text{ and } \rho \in \lbrack0,\infty)^\X \big\}.
\end{equation*}
Note that $\fS \subset \fS^\iso$ since some concentrations in $\fS^\iso$ could correspond to negative temperatures. Either one of the following two assumptions are needed:
\begin{quote}
  \textbf{Isothermal Detailed Balance (IDB):} There exists a unique state $\pi\in\fS^\iso$ such that for each forward reaction $r\in\R_\fw$:
  \begin{equation*}
    \kappa_r B_r(\pi) = \kappa_{\bw(r)} B_{\bw(r)}(\pi).
  \end{equation*}
\end{quote}

\begin{quote}
  \textbf{Isothermal Complex Balance (ICB):}   There exists a unique state $\pi\in\fS^\iso$ such that for each complex $\alpha\in \{\alpha\super{r}:r\in\R\}$:
  \begin{equation*}
    \sum_{r\in\R:\alpha\super{r}=\alpha} \kappa_r B_r(\pi) = \sum_{r\in\R:\alpha^{\bw(r)}=\alpha} \kappa_{r} B_r(\pi).
  \end{equation*}
\end{quote}
Naturally for both assumptions $\pi$ is the steady state for the macroscopic isothermal equation $\dot\rho(t)=\Gamma j(t), \,j_r(t)=\kappa_r B_r(\rho(t)) - \kappa_{\bw(r)} B_{\bw(r)}(\rho(t))$.

A direct calculation shows that (ICB) is equivalent to~\cite[Sec.~3.2]{AndersonCraciunKurtz2010},
\begin{equation}
  \sum_{r\in\R_\fw} \big(  \kappa_r B_r(\pi) -\kappa_{\bw(r)}B_{\bw(r)}(\pi) \big)
  \big(\psi_{\alpha^{\bw(r)}} - \psi_{\alpha\super{r}}\big) =0
\label{eq:ICB2}  
\end{equation}
for all test functions $\psi$ mapping complexes $\alpha\in\{\alpha\super{r}:r\in\R\}$ to the reals. This will be useful in the next proof.

\begin{remark}
Both conditions (IDB) and (ICB) are macroscopic in nature. However, since our network is assumed to be reversible, (IDB) is equivalent to detailed balance of the isothermal microscopic system, in the sense of reversibility of the Markov process~\cite[Th.~4.5]{AndersonCraciunKurtz2010}.
\end{remark}


We now come to our first main result.

\begin{prop} Assume that either (IDB) holds, or that (ICB) holds and $\sa_r$ is constant in $r\in\R$, and let $\pi\in\lbrack0,\infty)^\X$ be the corresponding steady state of the isothermal equation. Then at all points of differentiability of $\V$, the Hamilton-Jacobi-Bellman equation~\eqref{eq:HJB} holds with
\begin{align}
  \V(\rho)\coloneqq \S(\rho\mid\pi) - \frac{c_H}{k_B}\log\theta - C, && \theta\coloneqq \frac{E^0-\se\cdot\rho}{c_H},
\label{eq:QP2}
\end{align}
where
\begin{equation}
  C=\inf_{\substack{\rho\in\fS,\\ \theta\coloneqq (E^0-\se\cdot\rho)/c_H}} \S(\rho\mid\pi)-\mfrac{c_H}{k_B} \log \theta.
\label{eq:QP normalisation}
\end{equation}
\label{prop:QP}
\end{prop}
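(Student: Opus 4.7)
My plan is to substitute the candidate $\V$ directly into the Hamilton-Jacobi-Bellman equation \eqref{eq:HJB}, and to reduce it, via one Arrhenius identity, to the balance conditions (IDB) or \eqref{eq:ICB2}. The normalisation $\inf\V=0$ is immediate from the definition of $C$ in \eqref{eq:QP normalisation}.

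First I compute the gradient. Since $\theta=(E^0-\se\cdot\rho)/c_H$ depends on $\rho$, the chain rule together with $\partial_a s(a\mid b)=\log(a/b)$ gives
\begin{equation*}
  (\grad\V(\rho))_x = \log\frac{\rho_x}{\pi_x} + \frac{\se_x}{k_B\theta},\qquad x\in\X.
\end{equation*}
Setting $\xi\coloneqq\grad\V(\rho)$ and recalling $\gamma\super{r}=\alpha^{\bw(r)}-\alpha\super{r}$, a direct expansion yields
\begin{equation*}
  e^{\xi\cdot\gamma\super{r}}
  = \frac{B_{\bw(r)}(\rho)}{B_r(\rho)}\,\frac{B_r(\pi)}{B_{\bw(r)}(\pi)}\,
    \exp\!\Big(\tfrac{\se\cdot\gamma\super{r}}{k_B\theta}\Big).
\end{equation*}
The Arrhenius assumption $\sa_{\bw(r)}=\sa_r$ gives the pivotal identity $A_{\bw(r)}(\theta)/A_r(\theta)=\exp\bigl(\tfrac{\se\cdot\gamma\super{r}}{k_B\theta}\bigr)$, so multiplying by $k_r(\rho)=\kappa_r A_r(\theta)B_r(\rho)$ collapses the exponential into
\begin{equation*}
  k_r(\rho)\,e^{\xi\cdot\gamma\super{r}}
  = \frac{\kappa_r B_r(\pi)}{\kappa_{\bw(r)} B_{\bw(r)}(\pi)}\,k_{\bw(r)}(\rho).
\end{equation*}

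Under (IDB), the prefactor equals $1$ for every $r$, hence $\hat\H(\rho,\xi)=\sum_{r\in\R}\bigl(k_{\bw(r)}(\rho)-k_r(\rho)\bigr)=0$ because $\bw$ is an involution on $\R$. Under (ICB) with $\sa_r\equiv\sa$, the factor $A_r(\theta)B_r(\rho)=\theta^q e^{-(\sa-\se\cdot\alpha\super{r})/(k_B\theta)}\rho^{\alpha\super{r}}$ depends on $r$ only through the source complex $\alpha\super{r}$, so introducing
\begin{equation*}
  \psi_\alpha\coloneqq \theta^q e^{-(\sa-\se\cdot\alpha)/(k_B\theta)}\rho^\alpha/\pi^\alpha
\end{equation*}
the identity above rewrites as $k_r(\rho)\bigl(e^{\xi\cdot\gamma\super{r}}-1\bigr)=\kappa_r B_r(\pi)\bigl(\psi_{\alpha^{\bw(r)}}-\psi_{\alpha\super{r}}\bigr)$. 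Splitting $\R=\R_\fw\cup\R_\bw$ and re-indexing the backward half via $r\mapsto\bw(r)$ then produces precisely the expression in \eqref{eq:ICB2} for this particular test function $\psi$, which vanishes by (ICB).

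The main obstacle is the step that transfers the $r$-dependence of the rate onto the complex index $\alpha\super{r}$: this requires $A_r(\theta)B_r(\rho)$ to be a function of $\alpha\super{r}$ alone, which in turn forces $\sa_r$ to be constant in $r$. Without this hypothesis, $A_r$ still carries a reaction-specific residue and the reformulation of $\hat\H$ as an expression testable against \eqref{eq:ICB2} breaks down; this explains why the two-case structure of the proposition is unavoidable and matches the remark on non-constant $\sa_r$ in Subsection~\ref{subsec:subtleties}. Differentiability of $\V$ at $\rho$ is assumed in the statement (and holds whenever $\rho_x>0$ for all $x\in\X$ and $\theta>0$), so no further regularity issue arises.
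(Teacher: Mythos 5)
Your proof is correct and follows essentially the same route as the paper: compute $\grad\V$, substitute into $\hat\H$, and reduce via the $\sa_{\bw(r)}=\sa_r$ identity to (IDB) resp.\ \eqref{eq:ICB2} — your packaging of the algebra as $k_r(\rho)e^{\xi\cdot\gamma\super{r}}=\tfrac{\kappa_r B_r(\pi)}{\kappa_{\bw(r)}B_{\bw(r)}(\pi)}k_{\bw(r)}(\rho)$ is just a cleaner presentation of the same cancellation. The only small omission is that ``$\inf\V=0$ is immediate'' implicitly requires $C>-\infty$; the paper supplies the one-line lower bound $\S(\rho\mid\pi)-\tfrac{c_H}{k_B}\log\theta\geq\tfrac{c_H-E^0}{k_B}$ via energy conservation, which you should include for completeness.
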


\begin{proof}
We first focus on the gradient of $\V$ and comment on the constant $C$ at the end. Under (IDB) or (ICB), the steady state $\pi$ is coordinate-wise positive~\cite[Th.~3.2]{AndersonCraciunKurtz2010}. Thus, abbreviating $\log(\rho/\pi)\coloneqq (\log(\rho_x/\pi_x))_{x\in\X}$,
\begin{align}
  \grad\V(\rho)=\log\frac{\rho}{\pi} + \frac{\se }{k_B\theta}, && \theta\coloneqq \frac{E^0-\se\cdot\rho}{c_H},
\label{eq:gradV}
\end{align}
at all points of differentiability $\{\rho>0:\se\cdot\rho<E^0\}$.
Plugging this gradient into $\hat\H$:
\begin{align*}
  \hat\H\big(\rho,\grad\V(\rho)\big)
  &=\theta^q \sum_{r\in\R} \kappa_r e^{-\frac{\sa_r - \se\cdot\alpha\super{r}}{k_B\theta}}\rho^{\alpha\super{r}}
  \Big\lbrack e^{
      (\log\frac{\rho}{\pi}+\frac{\se}{k_B\theta})\cdot\gamma\super{r}
    } -1
  \Big\rbrack\\
  &=\theta^q \sum_{r\in\R_\fw} \big(  \kappa_r \pi^{\alpha\super{r}}-\kappa_{\bw(r)}\pi^{\alpha^{\bw(r)}} \big)\\
  &\hspace{5em}\times
  \Big( 
    \big(\tfrac{\rho}{\pi}\big)^{\alpha^{\bw(r)}} e^{-\frac{\sa_r-\se\cdot \alpha^{\bw(r)}}{k_B\theta}}
    -
    \big(\tfrac{\rho}{\pi}\big)^{\alpha\super{r}} e^{-\frac{\sa_r-\se\cdot \alpha\super{r}}{k_B\theta}}
  \Big).
\end{align*}
If (IDB) holds, then clearly the first bracket is zero for all $r\in\R_\fw$.

If $\sa_r=0$, then all variables in the second expression depend on the complex only. Thus
\begin{align*}
  \hat\H\big(\rho,\theta,\grad\V(\rho)\big)
  &=\theta^q \sum_{r\in\R_\fw} \big(  \kappa_r \pi^{\alpha\super{r}}-\kappa_{\bw(r)}\pi^{\alpha^{\bw(r)}} \big)
  \big(\psi_{\alpha^{\bw(r)}} - \psi_{\alpha\super{r}}\big),
\end{align*}
where the test function $\psi:\{\alpha\super{r}:r\in\R\}\to\RR$ is defined as $\psi_\alpha\coloneqq (\rho/\pi)^{\alpha}\exp(\se\cdot\alpha/(k_B\theta))$. This expression is zero under the assumption (ICB) because of~\eqref{eq:ICB2}.

It remains to show that $\inf\V=0$, which is clearly true whenever $C$ is finite. This is easily seen by recalling that
\begin{equation}
  \se\cdot\rho+c_H\theta=\se\cdot\rho^0+c_H\theta^0=:E^0,
\label{eq:energy conservation}
\end{equation}
and so one can bound from below:
\begin{equation}
  \S(\rho\mid\pi) - \frac{c_H}{k_B}\log\theta \geq \frac{c_H}{k_B}(1-\theta)
   = \frac{c_H}{k_B}(1-\frac{E^0 -\se\cdot\rho}{c_H})
  \geq \frac{c_H-E^0}{k_B}.
  \label{eq:V bdd below}
\end{equation}

\end{proof}
\begin{remark} The proof shows that the factor $\theta^q$ plays no role whatsoever since it is independent of the reaction $r$. Moreover, the proof will generally break down if (ICB) holds but $\sa_r$ is not constant in $r$.
\end{remark}

Now that the quasipotential for the anisothermal dynamics is known and is a strictly convex Lyaponuv functional for the macroscopic dynamics~\eqref{eq:ODE}, the steady state $(\rho^\infty,\theta^\infty)$ can be found as the minimiser in the minimisation problem~\eqref{eq:QP normalisation}. Since $\V$ is a strictly convex function on a finite-dimensional space that is bounded from below by~\eqref{eq:V bdd below} and $\fS$ is convex, the minimiser exists and is unique. Let $P$ be the projection from $\RR^\X$ onto $\mathop{\mathrm{Ran}}\Gamma$:
\begin{equation*}
  P_{xy} \coloneqq  \sum_{r\in\R_\fw} \frac{\alpha\super{r}_x \alpha\super{r}_y}{\lvert \alpha\super{r}\rvert_2^2}.
\end{equation*}
Differentiating $\V$ yields that the minimiser satisfies 
\begin{align*}
  \rho^\infty_x = \pi_x e^{-\frac{\se_x}{k_B\theta^\infty}+((I-P)\lambda)_x}, &&
  \se\cdot\rho^\infty + c_H\theta^\infty = E^0,
\end{align*}
where the Lagrange multiplier $\lambda\in\RR^\X$ is chosen such that $(P-I)(\rho^\infty-\rho^0)=0$. It is difficult to obtain a more explicit expression, even without the constraint $\rho\in\rho^0+\mathop{\mathrm{Ran}}\Gamma$. 

\begin{remark}
  The quasipotential can now also be written as:
\begin{equation*}
  \V(\rho)=\S(\rho\mid\pi) -\S(\rho\mid\rho^\infty) - \frac{c_H}{k_B}\log\frac{\theta}{\theta^\infty}.
\end{equation*}
\end{remark}

\section{Onsager-Machlup Theory}
\label{sec:Onsager-Machlup}

In \cite{MielkePeletierRenger2014} we showed a modern version of the Onsager-Machlup principle. Translated to the setting of this paper, this means that if the Markov process $\rho\super{V}(t)$ is in (stochastic) detailed balance with respect to its invariant measure $\Pi\super{V}$ that has large-deviation rate $\V$, then there exist a unique \emph{dissipation potential} $\hat\Psi:\lbrack0,\infty)^\X\times\lbrack0,\infty)\times \RR^\X\times\RR$ so that
\begin{equation}
  \hat\L\big(\rho,u\big) = \hat\Psi\big(\rho,u\big) + \hat\Psi^*\big(\rho,-\tfrac12\grad\V(\rho)\big)
  +\tfrac12\grad\V(\rho)\cdot u
\label{eq:hatL decomp}
\end{equation}
for all $u\in\RR^\X$ and all points of differentiability $\rho\in\lbrack0,\infty)^\X$ of $\V$.
By definition, $\hat\Psi:\lbrack0,\infty)^\X\times\RR^\X\to\lbrack0,\infty)$ being a dissipation potential means that $\hat\Psi(\rho,u)$ is convex in $u$ and $\hat\Psi(\rho,0)\equiv0$, which reflects the physical principle that there is no dissipation at zero velocities. Of course, \eqref{eq:hatL decomp} is the local version of~\eqref{eq:Onsager-Machlup}.

It is difficult to check whether $\rho\super{V}(t)$ is in detailed balance if the invariant measure is not known. Luckily, detailed balance only needs to hold in the large-deviation regime. On the stochastic microscopic level, detailed balance with respect to an invariant measure $\Pi\super{V}$ means that:
\begin{align*}
  \Prob\big(\rho\super{V} \in d\rho\big) = \Prob\big(\overleftarrow\rho\super{V}\in d\rho\big), && \overleftarrow\rho\super{V}(t)\coloneqq \rho\super{V}(T-t),
\end{align*}
whenever $\rho\super{V}(0)$ is distributed according to $\Pi\super{V}$. Taking the large deviations on both sides $\V(\rho(0))+\int_0^T\!\hat\L(\rho(t),\dot\rho(t))\,dt=\V(\rho(T))+\int_0^T\!\hat\L(\rho(t),-\dot\rho(t))\,dt$, and since this holds for all $T$, one finds the following time-reversal symmetry on the large-deviations scale (see~\cite{MielkePeletierRenger2014}):
\begin{equation}
  \hat\L(\rho,u)=\hat\L(\rho,-u) + \grad\V(\rho)\cdot u
\label{eq:time-reversal symm1}
\end{equation}
for all points of differentiability $\rho$ of $\V$ and all $u\in\RR^\X$.
\begin{prop}
Assume (IDB), so that by Proposition~\ref{prop:QP} the quasipotential $\V$ is given by \eqref{eq:QP2}. Then the time-reversal symmetry~\eqref{eq:time-reversal symm1} holds.
\label{prop:DB time-reversal symm}
\end{prop}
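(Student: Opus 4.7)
The plan is to exploit the bijection on the space of admissible fluxes given by swapping forward and backward reactions, using the fact that (IDB) forces a clean relation between the kinetic constants and $\grad\V$.

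First I would extract from the proof of Proposition~\ref{prop:QP} a pointwise kinetic identity. The same calculation that shows $k_r(\rho)\,e^{\grad\V(\rho)\cdot\gamma\super{r}} = k_{\bw(r)}(\rho)$ under (IDB) (using $\kappa_r/\kappa_{\bw(r)} = \pi^{\gamma\super{r}}$, $\sa_r=\sa_{\bw(r)}$, and the form~\eqref{eq:gradV} of $\grad\V$) gives
\begin{equation*}
  \log \frac{k_r(\rho)}{k_{\bw(r)}(\rho)} = -\grad\V(\rho)\cdot\gamma\super{r}\qquad\text{for all } r\in\R.
\end{equation*}
At a point of differentiability of $\V$ we have $\rho>0$ and, by the standing assumption $\theta^->0$, also $\theta>0$, so every $k_r(\rho)$ is strictly positive and the logarithms are well-defined.

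Next, given any flux $j\in\lbrack0,\infty)^\R$ with $u=\sum_{r\in\R} j_r\gamma\super{r}$, define the swapped flux $\tilde\jmath_r\coloneqq j_{\bw(r)}$. Since $\bw(\bw(r))=r$ and $\gamma^{\bw(r)}=-\gamma\super{r}$, this swap is an involution and $\sum_r \tilde\jmath_r\gamma\super{r}=-u$. A direct calculation with the Boltzmann function~\eqref{eq:Boltzmann function}, valid for $j_r=0$ as well as $j_r>0$, gives
\begin{equation*}
  s(j_r\mid k_{\bw(r)}(\rho)) - s(j_r\mid k_r(\rho)) = j_r\log\frac{k_r(\rho)}{k_{\bw(r)}(\rho)} + k_{\bw(r)}(\rho)-k_r(\rho).
\end{equation*}
Summing over $r\in\R$, the telescoping term $\sum_r\bigl[k_{\bw(r)}(\rho)-k_r(\rho)\bigr]$ vanishes by relabeling, and inserting the kinetic identity yields
\begin{equation*}
  \S(\tilde\jmath\mid k(\rho)) - \S(j\mid k(\rho)) = -\grad\V(\rho)\cdot u.
\end{equation*}

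Finally, since $j\mapsto\tilde\jmath$ is a bijection between $\{j:\sum_r j_r\gamma\super{r}=u\}$ and $\{\tilde\jmath:\sum_r\tilde\jmath_r\gamma\super{r}=-u\}$, taking the infimum of the above identity over either set gives $\hat\L(\rho,-u)=\hat\L(\rho,u)-\grad\V(\rho)\cdot u$, which is \eqref{eq:time-reversal symm1}. No real obstacle arises: the only subtlety is justifying that all $k_r(\rho)>0$ at points of differentiability of $\V$, which follows from the positivity of $\pi$ established in \cite[Th.~3.2]{AndersonCraciunKurtz2010} together with $\theta^->0$.
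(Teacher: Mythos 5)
Your proof is correct, and it takes a genuinely different route from the paper's. The paper proceeds via convex duality: it observes that the time-reversal symmetry~\eqref{eq:time-reversal symm1} is equivalent to the symmetry of $\xi\mapsto\hat\H(\rho,\xi+\tfrac12\grad\V(\rho))$, and then verifies this symmetry by a direct computation with the explicit exponential form~\eqref{eq:hatH} of the Hamiltonian, where (IDB) kills a telescoping sum. You instead stay on the Lagrangian side and exploit the flux-swap involution $j_r\mapsto j_{\bw(r)}$: the kinetic identity $\log(k_r/k_{\bw(r)})=-\grad\V\cdot\gamma\super{r}$ (which follows directly from (IDB) and~\eqref{eq:gradV}) makes $\S(\tilde\jmath\mid k)-\S(j\mid k)=-\grad\V\cdot u$ an exact, pointwise relation, and the bijection between the constraint sets for $u$ and $-u$ then transports the infimum. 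The paper's dual route benefits from the clean closed form of $\hat\H$ and makes the (IDB) cancellation very visible; your primal route is more elementary (no duality needed), makes the mechanism --- detailed balance as a flux-level reversal symmetry --- manifest, and fits naturally with the flux-based MFT picture developed later in the paper. One small remark: the cleanest way to state the kinetic identity so it feeds directly into the flux rate function is $k_{\bw(r)}(\rho)=k_r(\rho)\,e^{\grad\V(\rho)\cdot\gamma\super{r}}$, which is exactly the relation~\cite[Eq.~(42)]{Renger2018a} the paper uses in Section~\ref{sec:MFT} and which says the process is self-adjoint with respect to $\Pi\super{V}$; your proof effectively re-derives that identity from Proposition~\ref{prop:QP}.
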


\begin{proof}
Taking the convex dual on both sides shows that \eqref{eq:time-reversal symm1} is equivalent to the symmetry of $\H(\rho,\xi+\tfrac12\grad\V(\rho))$ in $\xi$. We thus calculate, abbreviating $\theta=(E^0-\se\cdot\rho)/c_H$ and using~\eqref{eq:gradV}:
\begin{align*}
  &\H\big(\rho,\xi+\tfrac12\grad\V(\rho)\big) - \H\big(\rho,-\xi+\tfrac12\grad\V(\rho)\big)\\
  &\qquad =\sum_{r\in\R} \kappa_r A_r(\theta) B_r(\rho) 
    \Big( e^{(\frac12\log\frac\rho\pi + \frac12\frac{\se}{k_B\theta}+\xi)\cdot\gamma\super{r}}-e^{(\frac12\log\frac\rho\pi + \frac12\frac{\se}{k_B\theta}-\xi)\cdot\gamma\super{r}}\Big)\\
  &\qquad =\sum_{r\in\R} \kappa_r \pi^{-\frac12\gamma\super{r}} e^{-\frac{\sa_r-\se\cdot(\alpha\super{r}+\alpha^{\bw(r)})/2}{k_B\theta}} \rho^{\frac12(\alpha\super{r}+\alpha^{\bw(r)})}
    \Big(e^{\xi\cdot\gamma\super{r}}-e^{-\xi\cdot\gamma\super{r}}\Big) \\
  &\qquad=
  \sum_{r\in\R} \Big( \kappa_r \pi^{-\frac12\gamma\super{r}} - \kappa_{\bw(r)} \pi^{\frac12\gamma\super{r}}\Big) e^{-\frac{\sa_r-\se\cdot(\alpha\super{r}+\alpha^{\bw(r)})/2}{k_B\theta}} \rho^{\frac12(\alpha\super{r}+\alpha^{\bw(r)})}
    e^{\xi\cdot\gamma\super{r}}.
\end{align*}
The last line follows from $e^{-\xi\cdot\gamma\super{r}}=e^{\xi\cdot\gamma^{\bw(r)}}$ and changing the index. The first factor is zero due to (IDB):
\begin{equation*}
  \kappa_r \pi^{-\frac12\gamma\super{r}} - \kappa_{\bw(r)} \pi^{\frac12\gamma\super{r}} = \pi^{-\frac12(\alpha\super{r}+\alpha^{\bw(r)})} \big( \kappa_r\pi^{\alpha\super{r}} - \kappa_{\bw(r)}\pi^{\alpha^{\bw(r)}}\big)=0.
\end{equation*}
\end{proof}

Under the (IDB) assumption, as an immediate consequence of the time-reversal symmetry~\eqref{eq:time-reversal symm1} and \cite[Th.~2.1]{MielkePeletierRenger2014}, the Onsager-Machlup decomposition~\eqref{eq:hatL decomp} holds, with the uniquely given dissipation potentials:
\begin{align*}
  \hat\Psi^*(\rho,\xi)&= 2 \sum_{r\in\R_\fw} \sqrt{k_r(\rho)k_{\bw(r)}} \big(\cosh(\xi\cdot\gamma_r)-1\big),\\
  \hat\Psi(\rho,u)&=\sup_{\xi\in\RR^\X} \xi\cdot u - \hat\Psi^*(\rho,\xi).
\end{align*}
This result can be seen as an extension of \cite{MielkePattersonPeletierRenger2016} to include temperature effects. In that paper it is explained how $\hat\Psi$ can be written as an infimal convolution, which also applies to our anisothermal setting.

\begin{remark}
In general $\hat\Psi$ does not have a clean explicit formulation (apart from the infimal convolution). However, it does have the dual formulation:
\begin{equation}
  \hat\Psi(\rho,u)=\inf_{\substack{j\in\RR^{\R_\fw}:\\u=\Gamma j}} \, \sum_{r\in\R_\fw} 2\sqrt{k_r(\rho)k_{\bw(r)}(\rho)}  \Big(\cosh^*\big(\mfrac{j_r}{2\eta_r(\rho)}\big)+1\Big).
\label{eq:hatPsi}
\end{equation}
The appearance of the constrained infimum is no coincidence: $\hat\Psi$ is related to the dissipation potential for the MFT setting in the same way as \eqref{eq:contracted L}, as was shown in \cite{Renger2018}.
\label{rem:hatPsi}
\end{remark}

\section{Macroscopic Fluctuation Theory}
\label{sec:MFT}

In this section we move away from (isothermal) detailed balance (IDB) to more general dynamics. Physically, this means that we look beyond thermodynamically closed systems, allowing for interactions with the environment. However, since we will need the explicit expression of the quasipotential $\V$, we need to assume (ICB) and that $\sa_r$ is constant in $r$. 

We first derive the first decomposition~\eqref{eq:L decomp1} and introduce the symmetric and asymmetric forces. Then the rate functional is further decomposed using generalised orthogonality, and we end with an interpretation of the resulting MFT decomposition.

\subsection{Thermodynamic force and dissipation}

One of the aims of MFT is to uniquely distinguish between internal mechanisms, that would be present if the system was thermodynamically closed, and mechanisms that should be regarded as interactions with the environment. Naively one could look for a decomposition of the type~\eqref{eq:hatL decomp}, where the force $-\tfrac12\grad\V$ is replaced by a more general, non-conservative force. However, it is not clear whether such force indeed exists; this is one mathematical motivation to switch to the flux setting. Another, more physical motivation is the observation that systems that are driven out of equilibrium by external forces can potentially give rise to divergence-free cycles\footnote{In this context, the operator $\Gamma$ plays the role of a divergence, so divergence-free fluxes are fluxes in the kernel of $\Gamma$.}, and since~\eqref{eq:hatL decomp} is in essence an energy balance, one should not expect such a balance if cycles are not taken in account. The starting point of MFT is therefore to consider the flux large-deviations~\eqref{eq:L} and decompose it as follows, see for example~\cite{Bertini2015MFT,Renger2018a,MaesNetocny2008,PattersonRengerSharma2021TR}:
\begin{equation}
  \L(\rho,j) = \Psi(\rho,j) + \Psi^*\big(\rho,F(\rho)\big) - F(\rho)\cdot j,
\label{eq:L decomp}
\end{equation}
For all $j\in\RR^{\R_\fw}$ and all $\rho$ for which the force $F(\rho)$ is well defined. The dissipation potentials and force are uniquely given by~\cite[Th.~3.2]{RengerZimmer2021} (see also Remark~\ref{rem:hatPsi}):
\begin{align*}
  F_r(\rho) &\coloneqq  \frac12\log\frac{k_r(\rho)}{k_{\bw(r)}(\rho)}, \\
  \Psi^*(\rho,\zeta) &\coloneqq  2\sum_{r\in\R_\fw}\sqrt{k_r(\rho)k_{\bw(r)}}\big(\cosh(\zeta_r)-1\big), \\
  \Psi(\rho,j) &\coloneqq  2\sum_{r\in\R_\fw} \sqrt{k_r(\rho)k_{\bw(r)}} \Big( \cosh^*\big(\mfrac{j_2}{2\sqrt{k_r(\rho)k_{\bw(r)}}}\big) +1 \Big).
\end{align*}
Recall from~\eqref{eq:reaction rates} that the temperature-dependencies are implicit in the reaction rates $k_r,k_{\bw(r)}$. From the form of the force it becomes clear that \eqref{eq:L decomp} is only valid on the interior of $\fS$, where all reaction rates are positive. 

In case the process $\rho\super{V}(t)$ is in (stochastic) detailed balance, then similar to \eqref{eq:hatL decomp}, we would have $F(\rho)=-\frac12\Gamma\tp\grad\V(\rho)$ \cite[Rem.~3.3]{RengerZimmer2021}, where the $\Gamma\tp$ appears since the forces now act on fluxes. We thus decompose the force $F(\rho)=F^\sym(\rho)+F^\asym(\rho)$ into
\begin{align*}
  F^\sym_r(\rho)  &\coloneqq  -\frac12\Gamma\tp\grad\V(\rho), \quad\text{and}\\
  F^\asym_r(\rho) &\coloneqq  F(\rho)-F^\sym(\rho)= \frac12\log\frac{\kappa_r\pi^{\alpha\super{r}}}{\kappa_{\bw(r)}\pi^{\alpha^{\bw(r)}}},
\end{align*}
where $F^\asym\equiv0$ if (IDB) holds. Together with Theorem~\ref{th:LDP flux} and \eqref{eq:L decomp}, this provides the first decomposition~\eqref{eq:L decomp1}.

The two forces $F^\sym,F^\asym$ are symmetric respectively antisymmetric with respect to time-reversal of the microscopic process; we refer to ~\cite{Bertini2015MFT,RengerZimmer2021,PattersonRengerSharma2021TR} for the details. Observe that $F^\asym\equiv0$ precisely when (IDB) holds. Moreover, the antisymmetric force is constant in $\rho$, which was also observed in all applications in \cite{Bertini2015MFT,KaiserJackZimmer2018,PattersonRengerSharma2021TR}.

\subsection{Generalised orthogonality}

The first decomposition~\eqref{eq:L decomp1} still includes one term that involves both forces: $\Psi^*(\rho,F^\sym(\rho)+F^\asym(\rho))$. This term can be further decomposed by the notion of generalised orthogonality, first described in \cite{KaiserJackZimmer2018}, extended to jump processes in \cite{RengerZimmer2021}, and finally described for arbitrary processes in \cite{PattersonRengerSharma2021TR}. The idea is to generalise Hilbert orthogonality:
\begin{align*}
  \frac12\lVert a+b\rVert^2 = \frac12\lVert a\rVert^2 + \langle a,b\rangle + \frac12\lVert b\rVert^2 = \frac12\lVert a\rVert^2 + \frac12\lVert b\rVert^2 && \text{ whenever } a\perp b.
\end{align*}
A similar expansion can be written down for our dual dissipation potential $\Psi^*$, however since it is not quadratic, the cross product $\langle\cdot,\cdot\rangle$ becomes a non-bilear function $\eta_\rho(\cdot,\cdot)$, one of the potentials need to be ``modified'', and there are two ways to do so:
\begin{align}
  &\Psi^*\big(\rho,F^\sym(\rho)+F^\asym(\rho)\Big) \\
  &\qquad= \Psi^*\big(\rho,F^\sym(\rho)\big) + \eta_\rho\big(F^\asym(\rho),F^\sym(\rho)\big) +  \Psi^*_{F^\asym}\big(\rho,F^\sym(\rho)\big)\\
  &\qquad= \Psi^*\big(\rho,F^\asym(\rho)\big) +\eta_\rho\big(F^\sym(\rho),F^\asym(\rho)\big) +  \Psi^*_{F^\sym} \big(\rho,F^\asym(\rho)\big),
\label{eq:gen orth}
\end{align}
It turns out that the `cross product' $\eta_\rho(F^\sym(\rho),F^\asym(\rho))$ is always zero, which can be interpreted as a nonlinear version of $F^\sym(\rho)\perp F^\asym(\rho)$. We abbreviate the two terms on the right as $\Lambda_\asym^\sym(\rho)$ and $\Lambda_\sym^\asym(\rho)$ respectively.

\begin{theorem} Assume (ICB) and that $\sa_r$ is constant in $r\in\R$, so that by Proposition~\ref{prop:QP} the quasipotential $\V$ is given by \eqref{eq:QP2}. Then for all $\rho$ in the interior of $\fS$:
\begin{align}
  \Psi^*\big(\rho,F^\sym(\rho)+F^\asym(\rho)\Big) &= \Psi^*\big(\rho,F^\sym(\rho)\big) + \Lambda_\sym^\asym(\rho) \notag\\
  &= \Psi^*\big(\rho,F^\asym(\rho)\big) + \Lambda_\asym^\sym(\rho),
  \label{eq:Psis decomp}
\end{align}
with
\begin{align*}
  \Lambda_\sym^\asym(\rho)&\coloneqq \frac12\sum_{r\in\R} e^{-\frac{\sa_r-\se\cdot\alpha\super{r}}{k_B\theta}} \big(\mfrac{\rho}{\pi}\big)^{\alpha\super{r}} \Big( \sqrt{\kappa_r\pi^{\alpha\super{r}}} - \sqrt{  \kappa_{\bw(r)} \pi^{\alpha^{\bw(r)}} }\, \Big)^2, \\
  \Lambda_\asym^\sym(\rho)&\coloneqq \frac12\sum_{r\in\R} \kappa_r\pi^{\alpha\super{r}}\bigg(  \sqrt{e^{-\frac{\sa_r-\se\cdot\alpha\super{r}}{k_B\theta}}\big(\mfrac{\rho}{\pi}\big)^{\alpha\super{r}}} - \sqrt{e^{-\frac{\sa_r-\se\cdot\alpha^{\bw(r)} }{k_B\theta}}\big( \mfrac{\rho}{\pi}\big)^{\alpha^{\bw(r)}} } \,\bigg)^2.
\end{align*}
\end{theorem}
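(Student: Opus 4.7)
The strategy is to apply the cosh addition formula $\cosh(a+b)-1 = (\cosh a - 1) + \cosh a(\cosh b - 1) + \sinh a\sinh b$ inside the sum defining $\Psi^*(\rho, F^\sym+F^\asym)$. Weighting each reaction by $2\sqrt{k_r k_{\bw(r)}}$ and summing over $r\in\R_\fw$ produces the candidate decomposition $\Psi^*(\rho,F^\sym)+\Lambda_\sym^\asym(\rho)+\eta_\rho(F^\sym,F^\asym)$, where $\eta_\rho(F^\sym,F^\asym) := 2\sum_{r\in\R_\fw}\sqrt{k_r k_{\bw(r)}}\sinh F^\sym_r\sinh F^\asym_r$ is the ``cross term''. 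Exchanging the roles of $a$ and $b$ gives the dual decomposition involving $\Lambda_\asym^\sym$ with the same cross term. The key task is therefore to show $\eta_\rho\equiv 0$ and then extract the explicit closed form of $\Lambda_\sym^\asym$ (and, by the mirror computation, $\Lambda_\asym^\sym$).

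For the vanishing of $\eta_\rho$, I would expand $4\sinh F^\sym_r\sinh F^\asym_r = e^{F_r} - e^{F^\sym_r-F^\asym_r} - e^{-F^\sym_r+F^\asym_r} + e^{-F_r}$. Using \eqref{eq:gradV} and the definition of $F^\asym$, one finds $e^{2F^\sym_r}=(\rho/\pi)^{-\gamma\super{r}}e^{-\se\cdot\gamma\super{r}/(k_B\theta)}$ and $e^{2F^\asym_r}=\kappa_r\pi^{\alpha\super{r}}/(\kappa_{\bw(r)}\pi^{\alpha^{\bw(r)}})$. Combining these with the explicit form of $k_r$ (and $\sa_{\bw(r)}=\sa_r$), a direct manipulation collapses the four contributions into the compact identity
\begin{equation*}
  4\sqrt{k_r k_{\bw(r)}}\sinh F^\sym_r\sinh F^\asym_r = \theta^q e^{-\sa_r/(k_B\theta)}\bigl(\kappa_r\pi^{\alpha\super{r}}-\kappa_{\bw(r)}\pi^{\alpha^{\bw(r)}}\bigr)\bigl(\psi_{\alpha\super{r}} - \psi_{\alpha^{\bw(r)}}\bigr),
\end{equation*}
with the complex-indexed function $\psi_\alpha := (\rho/\pi)^\alpha e^{\se\cdot\alpha/(k_B\theta)}$. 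The assumption that $\sa_r$ is constant in $r$ is essential precisely because it makes the prefactor $\theta^q e^{-\sa_r/(k_B\theta)}$ independent of $r$, so $\psi$ depends only on the complex. Summing over $\R_\fw$ then matches the left-hand side of the equivalent characterisation \eqref{eq:ICB2} of (ICB), which vanishes.

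Once $\eta_\rho = 0$, the explicit form of $\Lambda_\sym^\asym := 2\sum_{r\in\R_\fw}\sqrt{k_r k_{\bw(r)}}\cosh F^\sym_r(\cosh F^\asym_r - 1)$ follows by direct computation. Writing $\cosh F^\asym_r-1 = (\sqrt{\kappa_r\pi^{\alpha\super{r}}}-\sqrt{\kappa_{\bw(r)}\pi^{\alpha^{\bw(r)}}})^2/(2\sqrt{\kappa_r\kappa_{\bw(r)}\pi^{\alpha\super{r}+\alpha^{\bw(r)}}})$ and expanding $2\sqrt{k_r k_{\bw(r)}}\cosh F^\sym_r$ via the same product-to-sum identity produces a factor $\sqrt{\kappa_r\kappa_{\bw(r)}\pi^{\alpha\super{r}+\alpha^{\bw(r)}}}$ that cancels the denominator; what remains is the squared difference multiplied by the two complex-indexed weights $\psi_{\alpha\super{r}}$ and $\psi_{\alpha^{\bw(r)}}$ (up to the $r$-independent prefactor $\theta^q e^{-\sa_r/(k_B\theta)}$). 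Since the squared difference is symmetric under $r\leftrightarrow\bw(r)$, the sum over $\R_\fw$ of these two weight contributions folds into a single sum over $\R$, yielding the stated formula for $\Lambda_\sym^\asym$. The formula for $\Lambda_\asym^\sym$ is obtained by the mirror-image computation, where the square root is absorbed inside the bracket to produce a squared difference of $\sqrt{\psi_{\alpha\super{r}}}$ and $\sqrt{\psi_{\alpha^{\bw(r)}}}$. The chief technical obstacle is bookkeeping: maintaining the complex-indexed structure of $\psi_\alpha$ and exploiting the $r\leftrightarrow\bw(r)$ symmetry to collapse sums over $\R_\fw$ into sums over $\R$.
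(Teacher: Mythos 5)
Your proof is correct and is a genuinely more self-contained route than the paper's. The paper's proof is essentially a ``plug-in'' argument: it cites \cite[Prop.~4.2]{RengerZimmer2021} for the vanishing of the cross term $\eta_\rho$, cites \cite[Sec.~5]{RengerZimmer2021} for the abstract identities $\Lambda_\sym^\asym(\rho)=\tfrac12\sum_r(\sqrt{k_r}-\sqrt{\overleftarrow{k}_r})^2$ and $\Lambda_\asym^\sym(\rho)=\tfrac12\sum_r(\sqrt{k_r}-\sqrt{\overleftarrow{k}_{\bw(r)}})^2$ in terms of the time-reversed rates, then uses the relation $\overleftarrow{k}_{\bw(r)}=k_r e^{\gamma\super{r}\cdot\grad\V}$ from \cite[Eq.~(42)]{Renger2018a} and substitutes the expression~\eqref{eq:gradV}. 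You instead rebuild everything from the $\cosh$ addition formula, identify the cross term $\eta_\rho=2\sum_{r\in\R_\fw}\sqrt{k_r k_{\bw(r)}}\sinh F^\sym_r\sinh F^\asym_r$ directly, and show it vanishes by collapsing it to the left-hand side of the (ICB) characterisation~\eqref{eq:ICB2} with the complex-indexed test function $\psi_\alpha=(\rho/\pi)^\alpha e^{\se\cdot\alpha/(k_B\theta)}$. I have checked the key identity $4\sqrt{k_r k_{\bw(r)}}\sinh F^\sym_r\sinh F^\asym_r = \theta^q e^{-\sa_r/(k_B\theta)}(\kappa_r\pi^{\alpha\super{r}}-\kappa_{\bw(r)}\pi^{\alpha^{\bw(r)}})(\psi_{\alpha\super{r}}-\psi_{\alpha^{\bw(r)}})$ and the explicit form of $\Lambda_\sym^\asym$ obtained by folding the sum over $\R_\fw$ into a sum over $\R$ using the $r\leftrightarrow\bw(r)$ symmetry of the squared difference; both are correct. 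What your approach buys is transparency: it makes explicit exactly where (ICB) and the constancy of $\sa_r$ enter (the latter is needed to pull $\theta^q e^{-\sa_r/(k_B\theta)}$ out of the sum so that the remaining expression depends only on complexes, exactly paralleling the mechanism in the proof of Proposition~\ref{prop:QP}). What the paper's route buys is brevity, and the recognition that the orthogonality structure is an instance of a general phenomenon rather than a model-specific coincidence. One remark: your computation correctly retains the $r$-independent prefactor $\theta^q$ in both $\Lambda$ formulas; this factor is in fact missing from the theorem statement as printed, so on this minor point your derivation is tighter than the displayed result.
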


\begin{proof} It was shown in \cite[Prop.~4.2]{RengerZimmer2021} that indeed $\eta_\rho(F^\sym(\rho),F^\asym(\rho))\equiv0$. The remaining terms on the right in \eqref{eq:gen orth} have the explicit form~\cite[Sec.~5]{RengerZimmer2021}:
\begin{align}
  \Lambda_\sym^\asym(\rho)&=\frac12\sum_{r\in\R} \Big( \sqrt{k_r(\rho)} - \sqrt{{\overleftarrow{k}\hspace{-0.3em}}_r(\rho)} \Big)^2, \\
  \Lambda_\asym^\sym(\rho)&=\frac12\sum_{r\in\R} \Big( \sqrt{k_r(\rho)} - \sqrt{{\overleftarrow{k}\hspace{-0.3em}}_{\bw(r)}(\rho)} \Big)^2.
\label{eq:Fishers1}
\end{align}
Here ${\overleftarrow{k}\hspace{-0.2em}}_r(\rho)$ and ${\overleftarrow{k}\hspace{-0.2em}}_{\bw(r)}(\rho)$ are the reaction rates corresponding to the adjoint Markov process $t\mapsto\rho\super{V}(T-t)$ starting from its invariant measure $\Pi\super{V}$. These rates are related to the quasipotential $\V$ via the relation~\cite[Eq.~(42)]{Renger2018a}:
\begin{equation*}
  {\overleftarrow{k}\hspace{-0.2em}}_{\bw(r)}(\rho) = k_r(\rho)e^{\gamma\super{r}\cdot\grad\V(\rho)}.
\end{equation*}
Plugging these and the reaction rates \eqref{eq:reaction rates} into \eqref{eq:Fishers1} completes the proof.
\end{proof}

\subsection{Interpretation}

Combining the decompositions \eqref{eq:L decomp} and \eqref{eq:Psis decomp} and integrating over time yields, using $F^\sym(\rho(t))\cdot\dot w(t)=-\frac12\grad\V(\rho(t)\cdot\Gamma w(t)=-\frac12\frac{d}{dt}\V(\rho(t))$ and $F^\asym(\rho(t))\cdot\dot w(t)=F^\asym\cdot\dot w(t)=\frac{d}{dt} (F^\asym\cdot w(t))$:
\begin{align}
\J_{\lbrack0,T\rbrack}(\rho,w)&=\int_0^T\!\L\big(\rho(t),\dot w(t)\big)\,dt \notag\\
  &= \int_0^T\!\Big\lbrack\underbrace{\Psi\big(\rho(t),\dot w(t)\big) + \Psi^*\big(\rho(t),F^\asym(\rho(t))\big) - F^\asym\big(\rho(t)\big)\cdot \dot w(t)}_{=:\L_{F^\asym}(\rho(t),\dot w(t))} \Big\rbrack\,dt \notag\\
    &\hspace{6em}+ \int_0^T\!\Lambda_\asym^\sym\big(\rho(t)\big)\,dt +\frac12\V(\rho(T))-\frac12\V(\rho(0))
    \label{eq:MFT sym}\\
  &= \int_0^T\!\Big\lbrack\underbrace{\Psi\big(\rho(t),\dot w(t)\big) + \Psi^*\big(\rho(t),F^\sym(\rho(t))\big) - F^\sym\big(\rho(t)\big)\cdot \dot w(t)}_{=:\L_{F^\sym}(\rho(t),\dot w(t))} \Big\rbrack\,dt \notag\\
    &\hspace{6em}+ \int_0^T\!\Lambda_\sym^\asym\big(\rho(t)\big)\,dt - F^\asym\cdot w(T).
    \label{eq:MFT asym}
\end{align}
A priori these decompositions only hold as long as $\rho(t)$ remains in the interior of $\fS$, but since all expressions are well-defined for any $\rho\in\fS$, an approximation argument yields the result for all paths $\rho(t)$ in $\fS$, see for example~\cite[Prop.~5.3 \& 5.4]{RengerZimmer2021}. The two decompositions isolate dissipative and non-dissipative effects in anisothermal reaction networks, and as far as we are aware, this is the first result in this direction.

By convex duality the two expressions $\L_{F^\sym}(\rho(t),\dot w(t))$ and $\L_{F^\asym}(\rho(t),\dot w(t))$ are again non-negative, and their integrals can be interpreted as a large-deviation rate of a modified microscopic process. The modified process corresponding to $\L_{F^\sym}$ is in detailed balance, and hence it is decomposed in a similar fashion as the Onsager-Machlup result~\eqref{eq:hatL decomp} (but now on the level of fluxes). The interpretation of $\L_{F^\asym}$ is a bit more involved. It is shown in \cite{PattersonRengerSharma2021TR} that for simpler examples the path $\L_{F^\asym}(\rho(t),w(t))=0$ is a Hamiltonian system - although for (isothermal) reaction networks this is still an open question.

What can one deduce from~\eqref{eq:MFT sym} and \eqref{eq:MFT asym}? First of all, $\L(\rho(t),\dot w(t))\equiv0$ along the expected path~\eqref{eq:ODE}, so one obtains estimates on the free energy loss and work:
\begin{align*}
  \frac12\V(\rho(T))-\frac12\V(\rho(0)) &= - \int_0^T\!\L_{F^\asym}\big(\rho(t),\dot w(t)\big)\,dt - \int_0^T\!\Lambda_\asym^\sym\big(\rho(t)\big)\,dt\leq0,\\
  F^\asym\cdot w(T)                     &= - \int_0^T\!\L_{F^\sym} \big(\rho(t),\dot w(t)\big)\,dt - \int_0^T\!\Lambda_\sym^\asym\big(\rho(t)\big)\,dt\leq0.
\end{align*}
Naturally, the fact that the quasipotential $\V$ is Lyapunov along the expected macroscopic path is well-known, see for example~\cite{Freidlin2012}.

Second, away from the expected one obtains the estimates:
\begin{align*}
  \frac12\V(\rho(T)) + \int_0^T\!\Lambda_\asym^\sym\big(\rho(t)\big)\,dt &\leq \frac12\V(\rho(0)) + \int_0^T\!\L\big(\rho(t),\dot w(t)\big)\,dt,\\
  F^\asym\cdot w(T) + \int_0^T\!\Lambda_\sym^\asym\big(\rho(t)\big)\,dt  &\leq  \int_0^T\!\L\big(\rho(t),\dot w(t)\big)\,dt.
\end{align*}
Hence a good control of the rate functional and initial condition implies good control of the Fisher information and $\frac12\V(\rho(T)), F^\asym\cdot w(T)$. These estimates are very useful to obtain compactness for paths of bounded rate functional, see for example~\cite{HilderPeletierSharmaTse2020,PeletierRenger2021}.


\section*{Acknowledgements}

This research has been funded by Deutsche Forschungsgemeinschaft (DFG) through grant CRC 1114 ``Scaling Cascades in Complex Systems'', Project Number 235221301, Project C08. The author thanks Luigi delle Site and Upanshu Sharma for their valuable comments.

\bibliographystyle{alphainitials}
\bibliography{library}

\end{document}